\documentclass[11pt,twoside]{article}

\usepackage{parskip}

\usepackage[margin=1in]{geometry}

\usepackage{amsmath,amsthm,amsfonts,amssymb}
\usepackage{mathtools}
\usepackage{dsfont}
\usepackage{bm}

\usepackage{graphicx}
\usepackage{xcolor}
\usepackage{tikz}

\usepackage{subcaption}
\usepackage{booktabs}
\usepackage{array}
\usepackage{multirow}

\usepackage{algorithm}
\usepackage{algpseudocode}

\usepackage{natbib}
\setcitestyle{authoryear,open={(},close={)}}

\usepackage[colorlinks,linkcolor=blue!50!black,citecolor=blue!50!black,urlcolor=blue!50!black]{hyperref}

\usepackage[noabbrev]{cleveref}

\usepackage{microtype}

\usepackage{setspace}

\usepackage{nicefrac}
\usepackage{comment}

\newtheorem{result}{Result}

\newtheorem{lemma}{Lemma}

\newtheorem{corollary}{Corollary}

\theoremstyle{definition}

\newcommand{\E}{\mathds{E}}
\renewcommand{\P}{\mathds{P}}
\newcommand{\V}{\mathds{V}}

\newcommand{\bx}{\bm{x}}

\newcommand{\bX}{\bm{X}}

\newcommand{\R}{\mathbb{R}}

\newcommand{\cB}{\mathcal{B}}

\newcommand{\cW}{\mathcal{W}}

\usepackage{enumitem}
\usepackage[colorlinks]{hyperref}
\usepackage[noabbrev]{cleveref}

\newcommand{\indep}{\perp\!\!\!\perp}

\newcommand{\bbeta}{\bm{\beta}}

\allowdisplaybreaks

\newtheoremstyle{remarkstyle}%
  {6pt}      
  {6pt}      
  {\itshape} 
  {}         
  {\bfseries}
  {.}        
  {.5em}     
  {}         
\theoremstyle{remarkstyle}
\newtheorem{remark}{Remark}

\begin{document}
	
	\begin{center}
		
		{\bf{\LARGE{High-Dimensional Variance Estimation for the \\[8pt] Generalized Regression Estimator}}}
		
		\vspace*{0.2in}
		
		{\large{
				\begin{tabular}{ccc}
					Kalil {\sc Bouhadra}$^{(a)}$ and Mehdi {\sc Dagdoug}$^{(a)}$ 
				\end{tabular}
		}}
		
		\medskip
		
		\begin{tabular}{c}
			$^{(a)}$ Department of Mathematics and Statistics, McGill University
		\end{tabular}
		
		\vspace{0.2in}
		
		\begin{abstract}
	In survey sampling, the goal is to estimate finite population parameters such as totals, means, and proportions. At the estimation stage, it is common to have access to auxiliary information in the form of covariates known either in aggregate form or for each population unit. These covariates are often used, through models relating them to the variable of interest, to improve efficiency; this approach is known as model-assisted estimation. Modern applications increasingly involve settings where a large number of covariates are observed, sometimes of the same order as the sample size. While this setting offers greater modeling flexibility, it also creates important challenges for inference. In this article, we study variance estimation for the generalized regression (GREG) estimator in high-dimensional regimes. We derive new theoretical results that characterize the high-dimensional asymptotic bias of commonly used variance estimators, including those based on Taylor linearization. Furthermore, under suitable distributional assumptions on the covariates, we show that a cross-validated variance estimator is naturally asymptotically unbiased.

\vspace{0.5cm}

   \noindent \textbf{Keywords:} finite population sampling; model-assisted estimation; variance estimation; high-dimensional asymptotics; cross-fitting.
		\end{abstract}
		
	\end{center}

    \section{Introduction} \label{sec:intro}

Model-assisted methods are widely used in survey sampling to improve the efficiency of point estimators by leveraging auxiliary information. In particular, the generalized regression estimator (GREG) provides a flexible framework for incorporating covariates through linear modeling; see, e.g., \citet{cassel1977foundations, sarndal1984cosmetic} for foundational contributions and \citet{sarndal2003model} for a pedagogical treatment. Beyond point estimation, variance estimation plays a central role in practice, as it allows for the construction of confidence intervals and other measures of uncertainty routinely reported by national statistical offices. 

A variety of methods have been proposed for variance estimation of the GREG estimator, including approaches based on Taylor linearization and its $g$-weighted version \citep{sarndal1989weighted, valliant2002variance}, as well as resampling techniques such as the jackknife \citep{duchesne2000note, berger2005jackknife}. More recent contributions on the topic include, among others, \citet{stefan2023bootstrap} and \citet{stefan2024jackknife}, which further develop bootstrap and jackknife methodologies in this context.

Classically, the properties of the GREG estimator and its associated variance estimators have been studied within a low-dimensional asymptotic framework, in which the number of covariates is assumed to be negligible relative to the sample size \citep{robinson1983asymptotic, kott1990estimating}. More precisely, this framework considers a regime where the number of covariates $p$ is fixed, while the sample size $n$ and population size $N$ tend to infinity. Such an approximation is appropriate to model practical situations when the ratio $p/n$ is small, that is, $p/n \approx 0$.

In many modern applications, however, practitioners are confronted with settings where the number of covariates is no longer negligible compared to the sample size. For instance, when $n = 100$ and $p = 30$, the ratio $\kappa = p/n$, here equal to $\kappa = 0.3$, is non negligible. A large number of covariates may also originates from a nonlinear low-dimensional setup via basis expansions of the original covariates (e.g., adding powers and interactions of the original covariates). This has led to a growing interest in high-dimensional regimes in survey sampling, where the number of covariates $p$ increases with the sample size $n$. In this context, \citet{cardot2017calibration} studied calibration based on principal components, while \citet{ta2020generalized, chauvet2022asymptotic, dagdoug2023model} investigated the high-dimensional properties of the GREG estimator. More recently, \citet{eustache2025high} highlighted important limitations of classical variance estimators in such settings. In particular, they showed that standard procedures based on Taylor linearization tend to underestimate the variance, whereas resampling methods such as the jackknife tend to overestimate it. These biases can be substantial and persist asymptotically when $p/n$ does not vanish.

Roughly speaking, these phenomena can be traced back to the high-dimensional behavior of several key quantities, which deviate markedly from their classical low-dimensional counterparts (e.g., residuals, leverages, and $g$-weights). For instance, the underestimation exhibited by Taylor-based variance estimators can be attributed to an underestimation of the variability of the regression residuals in high dimensions. Similar issues have been documented in classical statistics; see, for example, \citet{el2018can, zhao2022adaptively}. 

These effects are closely related to overfitting, which arises when a model-assisted estimator relies on a highly flexible regression function. In such cases, residuals tend to be artificially small, and when plugged into Taylor-type variance estimators, this leads to a systematic underestimation of the true variance \citep{dagdoug2023model}. To address this issue, \citet{dagdoug2023model} proposed a cross-validated variance estimator, which replaces the in-sample residuals with residuals obtained through cross-validation. This idea initially originated from
\citet{opsomer2005selecting}, who introduced related techniques in the context of hyper-parameter tuning for model-assisted estimators based on local polynomials. Although developed from a different perspective, it is also closely connected to the cross-fitted variance estimator studied in \citet{an2026agnostic}.

Although \cite{eustache2025high} highlighted important issues of the classical variance estimators of the GREG in high dimensions, several aspects are only partially understood. For example, the high-dimensional bias of the Taylor variance estimator depends on the population average of the so-called g-weights, denoted $\overline{G}_N$. However, in a fixed-design setting, the high-dimensional behavior of $\overline{G}_N$ is difficult to characterize and is generally unknown. In addition, these results were established under high-dimensional assumptions on the sample leverage scores, whose behavior also depends on the distribution of the covariates. These assumptions were supported empirically, but their theoretical validity was not studied. Finally, under Bernoulli sampling, the bias formulas of \cite{eustache2025high} can be used to construct debiased variance estimators. It is less clear, however, whether these estimators would continue to perform well beyond the Bernoulli setting. This motivates the search for a variance estimator that is \emph{naturally} asymptotically unbiased and valid in all dimensions, rather than one obtained through adjustments specific to a particular setting.

\paragraph{Contributions.} In this paper, we further investigate the problem of variance estimation for the GREG estimator in high-dimensional regimes. First, we show that, under suitable conditions on the covariates, some of the assumptions used in \citet{eustache2025high} hold in classical settings. This includes characterizing the high-dimensional behavior of the g-weights mean $\overline{G}_N$ and of the sample leverages. Second, we study the high-dimensional asymptotic bias of a cross-validated variance estimator and establish that, under appropriate conditions on the covariates and the sampling design, the estimator is asymptotically unbiased in all regimes, and therefore does not require any bias correction. We also provide closed-form expressions for the asymptotic biases of the Taylor variance estimator and its g-weighted version. The empirical results presented confirm that these expressions describe well empirical phenomena.

\paragraph{Outline.} The remainder of the paper is organized as follows. In Section~\ref{sec:setup}, we introduce the model-assisted framework and formally define the problem of interest. Section~\ref{sec:main} presents the main theoretical results. In Section~\ref{sec:simu}, we illustrate these findings through a simulation study. Finally, Section~\ref{sec:final} concludes with a discussion of the limitations of our approach and directions for future research. All proofs are postponed to the Appendix. 

\section{Basic setup} \label{sec:setup}

\subsection{Model-assisted estimation}
Consider a finite population $U_N := \{ 1, 2, ..., N \}$ of size $N$. The measurements of a survey variable $Y$ are denoted $y_i$ for $i \in U_N$. We aim to estimate the finite population mean $$ \mu := \dfrac{1}{N}\sum_{i \in U_N} y_i.$$ A random sample $S_N$ of size $n_N$ is drawn from a sampling design $\mathcal{P}_N$. The sample $S_N$ is equivalently characterized by the tuple $(I_i)_{i\in U_N}$ of sampling indicators satisfying $I_i:=1$ if $i \in S_N$ and $I_i :=0$, otherwise. The first-order and second-order inclusion probabilities are defined respectively as $$\pi_i := \P (i \in S_N), \quad \pi_{ij} := \P(i,j\in S_N), \qquad i,j \in U_N.$$

We assume that $\pi_i >0$ for all $i \in U_N$, under which the Horvitz-Thompson estimator $\widehat{\mu}_{\pi} $ of $\mu$ defined by $$ \widehat{\mu}_{\pi}:= \dfrac{1}{N}\sum_{i \in S_N}\dfrac{y_i}{\pi_i},$$ is design-unbiased for $\mu$. We write $\E_p$ and $\V_p$ to denote the expectation and variance with respect to the sampling design, respectively. We denote by $\Delta_{ij} := \pi_{ij}-\pi_i\pi_j $ the sampling covariance of elements $i,j \in U_N$. 

We consider a framework where $p_N$ covariates $X_1,X_2, ..., X_{p_N}$ are observed for every population element; we denote by $\bx_i$ the measurement of these covariates for element $i \in U_N$. Without loss of generality, we assume that the intercept is included in the first position, thus $X_0 =1$. Although the overall number of covariates is $p_N + 1$ with the intercept, we sometimes write $p_N$ instead, for simplicity.

Model-assisted estimation is commonly used to leverage the predictive power of covariates for the variable of interest. In the particular case of linear regression, the resulting estimator is the so-called Generalized REGression estimator \cite[GREG]{sarndal2003model} defined by 
     	\begin{equation}\label{eq:greg}
\widehat{\mu}_{greg}:= \dfrac{ 1}{N}\left(\sum_{i \in U_N} \bx_i^\top \widehat{\bbeta}_N + \sum_{i \in S_N}\dfrac{y_i -\bx_i^\top \widehat{\bbeta}_N  }{\pi_i}\right),
 	\end{equation} where, assuming that $\boldsymbol{A}_\Pi :=  \sum_{i \in S_N}\pi_i^{-1}\bx_i \bx_i^\top$ is invertible, the weighted least squares coefficients $ \widehat{\bbeta}_N$ are given by
    $$ \widehat{\bbeta}_N :=\left(\sum_{i \in S_N} \dfrac{\bx_i \bx_i^\top }{\pi_i} \right)^{-1}\sum_{i \in S_N} \dfrac{\bx_i y_i }{\pi_i}.$$
In what follows, we assume that $\boldsymbol{A}_\Pi$ is invertible, an assumption in line with the current literature, see, e.g., \cite{chauvet2022asymptotic}. 

\subsubsection*{High-dimensional asymptotics}

We are interested in studying the behavior of variance estimators of $\V_p(\widehat{\mu}_{greg})$ in situations where the number of covariates $p_N$ is smaller than the sample size $n_N$, but of similar order, that is, $p_N/n_N\not\approx 0. $ To model this situation, we embed it into an appropriate sequence of similar configurations. Specifically, we adapt the framework of \cite{isaki1982survey} to accommodate for high-dimensional limiting cases. We consider a sequence of increasing populations $(U_N)_{N \in \mathbb{N}}$. In each population, a random sample $S_N$ of size $n_N$ is selected using a sampling design $\mathcal{P}_N$. Although we require the populations $(U_N)_{N \in \mathbb{N}}$ to be embedded, the samples $(S_N)_{N \in \mathbb{N}}$ are random and need not be. Based on each sample $S_N$ and $p_N$ covariates, a model-assisted estimator $\widehat{\mu}_{greg}$ is defined. This framework, therefore, allows for a number of covariates $p_N$ that is increasing as $N$ increases. More specifically, with $\kappa_N := p_N/n_N$, we consider cases where $(p_N)_{N \in \mathbb{N}}$ satisfies $$\lim_{N \to \infty}  \dfrac{p_N}{n_N}  = \lim_{N \to \infty}  \kappa_N := \kappa_\star \in [0,1).$$
This framework includes: (i) the \emph{low-dimensional} case, where $\kappa_\star = 0$, where the number of covariates is asymptotically negligible with respect of the sample size $n_N$; (ii) the \emph{high-dimensional} case, where $\kappa_\star > 0$, where the number of covariates grows at the same order as $n_N$. The setup we consider, however, does not include the ultra-high dimensional case where $\kappa_*\geq 1$, which would require a different treatment.

\subsubsection*{A joint framework}
The aim of this article is to analyze the high-dimensional bias of design-based variance estimators. However, deriving meaningful theoretical results in a purely design-based setup is challenging. 
To circumvent this difficulty, we follow the technique of \cite{eustache2025high}, which considers a joint framework in which both the survey variable $Y$ and the sample $S_N$ are treated as random. Specifically, we assume that the measurements of the survey variable $(y_i)_{i\in U_N}$ are independent random variables satisfying the following linear model 
\begin{equation*}
    y_i = \bx_i^\top \bbeta + \epsilon_i, \qquad i \in U_N,
\end{equation*} where $\epsilon_i$ satisfies $\E_m[\epsilon_i]=0$ and $\E_m[\epsilon_i^2]:= \sigma^2$. 
We denote by $\E_m$ and $\V_m$ the expectation and variance with respect the distribution of the survey variable $Y$ (treating the covariates $X$ fixed), respectively.

The decomposition 
\begin{equation}\label{eq:decompo}
    \V_{mp} (\widehat{\mu}_{greg} ) = \E_m\left[\V_p\left(\widehat{\mu}_{greg}  \right) \right] + \V_m \left( \E_p \left[ \widehat{\mu}_{greg}\right]\right),
\end{equation}
motivates variance estimators of the type 
\begin{equation}\label{eq:var_est}
  \widehat{V}_{mp} =  \widehat{V}_{design} + \dfrac{\widehat{\sigma}^2}{N},  
\end{equation}
where $\widehat{V}_{design} $ represents an estimator of $\V_p\left(\widehat{\mu}_{greg}  \right)$ and $\widehat{\sigma}^2$ an estimator of $\sigma^2:= \V_m(y_1)$. Given that $\sigma^2$ can be estimated unbiasedly in all configurations $(n_N, p_N)$ with $p_N <n_N$, we assume without loss of generality that  $\sigma^2$ is known. 

The analysis technique we follow consists of studying the joint bias of estimators with the structure of $\widehat{V}_{mp}$. Our main underlying interest, however, remains the design bias of $\widehat{V}_{design}$. Since studying this bias directly is mathematically challenging, we instead analyze the joint bias of $\widehat{V}_{mp}$, which includes both a design component and a model component, with the understanding that the main contribution to the bias is expected to come from the design component.

\begin{remark}
The rationale for the structure of the variance estimator $ \widehat{V}_{mp} $ in \eqref{eq:var_est} is based on the intuition that $\widehat{V}_{design}$ estimates the term $\E_m\left[\V_p\left(\widehat{\mu}_{greg}\right)\right]$, and that $\widehat{\sigma}^2/N$ provides a reasonable estimator of $\V_m\left(\E_p\left[\widehat{\mu}_{greg}\right]\right)$. However, this implicitly relies on the idea that $\E_p\left[\widehat{\mu}_{greg}\right] = \mu$, so that $\V_m\left(\E_p\left[\widehat{\mu}_{greg}\right]\right) = \sigma^2/N$. Since the GREG estimator is biased, this argument is usually justified through asymptotic approximations: in the traditional low-dimensional asymptotic framework, the design bias of the GREG estimator is asymptotically negligible \citep{robinson1983asymptotic}. However, this result typically holds when the number of covariates is fixed. Therefore, the validity of the estimator structure in \eqref{eq:var_est} rests on the implicit assumption that the design bias of the GREG estimator remains asymptotically negligible, that is, $\E_m [\E_p [\widehat{\mu}_{greg} -\mu]^2] = o(N^{-1})$, even when $\kappa_\star>0$. This assumption is supported by existing empirical work (see, for example, the simulation section of \cite{dagdoug2023model}), but a formal proof is still lacking. This is, however, beyond the scope of the article, and we do not investigate it further. The above bias analysis technique, based on this idea, proved to work very well to model empirical phenomena, as shown by \cite{eustache2025high} and in our simulations, see Section \ref{sec:simu}.
\end{remark}

\subsection{The leave-one-out variance estimator} 
In the literature, many estimators $\widehat{V}_{design} $ of the design variance $\V_p\left(\widehat{\mu}_{greg}  \right)$ have been suggested. In \cite{eustache2025high}, the following estimators were investigated.
\begin{enumerate}
    \item 	The Taylor variance estimator, defined by 
    \begin{equation} \label{eq:tay}
       \widehat{V}_{tay}  = \dfrac{1}{N^2} \sum_{i \in S_N} \sum_{j \in S_N} \dfrac{\Delta_{ij}}{\pi_{ij}} \dfrac{\widehat{\epsilon}_{i }}{\pi_i} \dfrac{\widehat{\epsilon}_{j }}{\pi_j}, 
    \end{equation}
with $\widehat{\epsilon}_{i }:= y_i - \bx_i^\top \widehat{\bbeta}_N$ for $i\in S_N$.
    
\item 	The g-weighted Taylor variance estimator, defined by 
\begin{equation} \label{eq:tayG}
    \widehat{V}_{g}  = \dfrac{1}{N^2} \sum_{i \in S_N} \sum_{j \in S_N} \dfrac{\Delta_{ij}}{\pi_{ij}} \dfrac{g_{i,N}\widehat{\epsilon}_{i }}{\pi_i} \dfrac{g_\ell \widehat{\epsilon}_{j }}{\pi_j},
\end{equation}
where for $i\in S_N$, $g_{i,N} := \mathbf{t}_{\bx}^\top \boldsymbol{A}_{\Pi }^{-1} \bx_i$ with  $\mathbf{t}_{\bx} := \sum_{i \in U_N} \bx_i$.

\item The Generalized Jackknife variance estimator \citep{berger2005jackknife}, which has the following closed-form solution
\begin{equation} \label{eq:jk}
    \widehat{V}_{jk} = \dfrac{1}{N^2}  \sum_{i \in S_N} \sum_{j \in S_N} \dfrac{\Delta_{ij}}{\pi_{ij}}  \dfrac{\left(1 - w_i\right) g_{i,N} \  \widehat{\epsilon}_{i }}{\left(1-h_{ii,N} \right)\pi_i} \dfrac{\left(1 - w_j\right) g_j \ \widehat{\epsilon}_{j }}{(1-h_{jj})\pi_j},
\end{equation}
where  $w_i := \left(N\pi_i\right)^{-1}$ and the weighted leverages are defined by 
\begin{equation}\label{eq:leverage}
    h_{ii,N}  := \bx_i^\top \boldsymbol{A}_{\Pi }^{-1} \pi_i^{-1} \bx_i, \qquad i\in S_N.
\end{equation}
\end{enumerate}

Each of these estimators exhibits significant biases in high dimensions. Indeed, the Taylor variance estimators $\widehat{V}_{tay}$ in \eqref{eq:tay} and its g-weighted version $\widehat{V}_{g}$ in \eqref{eq:tayG} suffer from important negative biases, while the Jackknife variance estimator $\widehat{V}_{jk}$ in \eqref{eq:jk} exhibits a large positive bias \citep{eustache2025high}. These three estimators are algebraically very similar, yet show very different behaviors. Informally, there are three components providing a partial explanation to these phenomena. First, the high-dimensional behavior of the residuals $(\widehat{\epsilon}_i)_{i\in S_N}$. In high dimensions, these residuals tend to be \emph{artificially} underestimated.  Second, the high-dimensional behavior of the leverages differs from their classical, low-dimensional behavior. Specifically, when $p_N = p $ is fixed, then it can be shown under mild conditions that $\max_{i\in S_N} h_{ii,N} \to0$ in probability as $N \to \infty$. This does not hold in high dimension anymore since $$\max_{i\in S_N} h_{ii,N} \geq \dfrac{1}{n_N}\sum_{i\in S_N} h_{ii,N} = \kappa_N\xrightarrow[N \to \infty]{\mathbb{P}} \kappa_\star>0.$$ For example, in the extreme case where $p_N = n_N$, then $\widehat{\epsilon}_i = 0 $ for all $i\in S_N$. This holds even in cases where all covariates $X_1, ..., X_p$ are independent of $Y$ and thus have no "real predictive power". This is intimately linked to the behavior of the leverages. Indeed, for the sake of illustration, consider the unweighted OLS estimator where $\V_m (\widehat{\epsilon}_i) =\sigma^2(1-h_{ii,N})$. Then, unless $h_{ii,N} \to 0$ as $N\to \infty$, which may not hold in high dimensions, then the second moment of $\widehat{\epsilon}_i$ does not match that of $\epsilon_i$ since $\V_m(\epsilon_i) = \sigma^2$; in a low-dimensional setting with $p_N = p$ fixed, their second moment would match, asymptotically. Finally, the high-dimensional behavior of the GREG estimator, in particular its variance, is fundamentally different from that in the low-dimensional case; see, for example, the simulation study of \cite{dagdoug2023model}.

An informal summary reads as follows: naive $\widehat{\epsilon}_i$ leads to an underestimation (Taylor), multiplying them by $g_{i,N}$ (g-weighted) still leads to an underestimation, although less severe, and, if we were to divide by $1-h_{ii,N}$ (Jackknife -- albeit a negligible factor), we would obtain an overestimation. This suggests considering an estimator that corrects the in-sample residuals by the factor $1-h_{ii,N}$, while avoiding the additional g-weighting present in the jackknife estimator. This approach would lead to the following estimator of the design variance: 
    \begin{equation} \label{eq:loo}
       \widehat{V}_{loo}  = \dfrac{1}{N^2} \sum_{i \in S_N} \sum_{j \in S_N} \dfrac{\Delta_{ij}}{\pi_{ij}} \dfrac{\widehat{\epsilon}_{i }}{(1-h_{ii,N})\pi_i} \dfrac{\widehat{\epsilon}_{j }}{(1-h_{jj, N})\pi_j}.
    \end{equation}
Although our discussion motivating this estimator was informal, it is a very natural estimator in this regime and is known in the literature. Specifically, recognizing that the residual $\widehat{\epsilon}_i^{\ (i)} $ of element $i \in S_N$ that would have been obtained if $\widehat{\bbeta}_N$ were fitted without element $i$ satisfies $\widehat{\epsilon}_i^{\ (i)} = \widehat{\epsilon}_i/(1-h_{ii,N})$, we observe that        $\widehat{V}_{loo} $ corresponds the Leave-One-Out (LOO) variance estimator replacing potentially overfitted residuals $(\widehat{\epsilon}_i)_{i\in S_N}$ by the leave-one-out cross-validation residuals $(\widehat{\epsilon}_i^{\ (i)})_{i\in S_N}$. This estimator traces back to \cite{opsomer2005selecting} which was then used for hyper-parameter tuning. This also corresponds to a particular case of the cross-validated variance estimator in \cite{dagdoug2023model} and the crossfitted variance estimator in \cite{an2026agnostic}. The high-dimensional behavior of the LOO one of the primary interests of this article.

    \section{Main results} \label{sec:main}

    \subsection{Regularity conditions and uniform convergence of sample leverages} \label{sec:unif}

The results presented in this section will be proved under the following regularity conditions. 

\begin{enumerate}[label=(A\arabic*),ref=A\arabic*]    \item \label{A1} The sequence of sampling designs $(\mathcal{P}_N)_{N \in \mathbb{N}}$ is a sequence of Bernoulli designs with parameters $(\pi_N)_{N \in \mathbb{N}}$ with $ \lim_{N \to \infty} \pi_N := \pi_\star$ and $\lim_{N \to \infty}N\pi_N = \infty.$
        \item \label{A2}  The leverages $(h_{ii,N})_{i \in S_N, N \in \mathbb{N}}$ satisfy $$ \max_{i \in S_N} \bigg\rvert h_{ii,N}- \kappa_N \bigg\rvert \xrightarrow[N \to \infty]{\P} 0.$$
\end{enumerate}

Assumption (\ref{A1}) is essentially used to make mathematical arguments more tractable. Although restrictive, we believe that similar conclusions to those presented below would hold in closely related sampling designs such as simple random sampling without replacement. This is supported by the simulations presented in Section \ref{sec:simu}. In general, the aim of the article is to improve our understanding of the phenomena studied in settings where interpretable theoretical results are available.

Assumption (\ref{A2}) is a statement about the uniform convergence of the leverages to $\kappa_\star$ and was initially formulated in \cite{eustache2025high}. In a low-dimensional setting where $p_N = p$ is fixed, this result is easily established. In a high-dimensional setting, it is known that, for each $i \in S_N$, $h_{ii,N}= \kappa_N + o_\P(1) $ \citep{el2018can}.  A uniform statement, such as that of Assumption (\ref{A2}), is stronger and allows for a deeper investigation of the asymptotic bias of the LOO variance estimator. When $p_N^{3/2}\log(n_N)/n_N \to 0 $ but $p_N /\log(n_N) \to \infty$ as $N \to \infty$, Lemma 3.2 of \cite{portnoy1987central} shows that Assumption (\ref{A2}) holds in elliptical distributions. However, the above conditions imply $p_N/n_N \to 0$ as $N \to \infty$, which thus does not include the case $\kappa_\star>0$. We investigate this case in the next lemma. To do so, we also consider the following assumption on the distribution of the covariates. 

\begin{enumerate}[label=(C\arabic*),ref=C\arabic*]    \item \label{C1} The covariates include an intercept $X_0$ and $p_N$ covariates $X_1, X_2, ..., X_{p_N}$ that are independent with Gaussian distribution $\mathcal{N}(0,1)$.
\end{enumerate}

Assumption (\ref{C1}) considers the case where each covariate is independent of the others, with a standard normal distribution. Although restrictive, these types of distributional assumptions on the distribution of the covariates are often needed to establish the high-dimensional behavior of various statistics and are common in the high-dimensional literature. Similar approaches were used, for example, in \cite{portnoy1987central} or \cite{jiang2025new}, more recently. We note that Assumption (\ref{C1}) will only be used for some of our results. We conjecture that our results might hold for other distributions, although different proof strategies would be needed. For example, our simulation study includes covariates drawn from a uniform distribution, for which the empirical behavior of the estimators remains consistent with our theoretical findings.

\begin{lemma}\label{lemma:unifCV}
 Assume (\ref{A1}) and (\ref{C1}). Then, (\ref{A2}) holds, that is, $$ \max_{i \in S_N} \bigg\rvert h_{ii,N}- \kappa_N \bigg\rvert \xrightarrow[N \to \infty]{\P} 0.$$
\end{lemma}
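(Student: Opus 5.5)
Under (\ref{A1}) all $\pi_i=\pi_N$ are equal, so the weights cancel and $h_{ii,N}=\bx_i^\top(\bX_S^\top\bX_S)^{-1}\bx_i$. These are the ordinary hat-matrix diagonals of the $n_N\times(p_N+1)$ design $\bX_S$. Because the design is Bernoulli and independent of the covariates, conditional on the realized sample the rows $\bx_i=(1,\bm{z}_i^\top)^\top$, $i\in S_N$, are i.i.d.\ with $\bm{z}_i\sim\mathcal{N}(0,I_{p_N})$. Since $n_N/(N\pi_N)\to1$ in probability and $N\pi_N\to\infty$, it therefore suffices to prove the claim for a deterministic sample size $n\to\infty$ with $p/n\to\kappa_\star$, uniformly over the conditional law.

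\textbf{Step 1: remove the intercept.} Let $\bar{\bm z}$ denote the sample mean and let $\tilde{\bm z}_i=\bm z_i-\bar{\bm z}$. Since the hat matrix of $[\mathbf 1,\bm Z]$ is $\frac1n\mathbf 1\mathbf 1^\top$ plus the hat matrix of the centered design $\tilde{\bm Z}$, we get
\[
h_{ii}=\tfrac1n+\tilde{\bm z}_i^\top(\tilde{\bm Z}^\top\tilde{\bm Z})^{-1}\tilde{\bm z}_i .
\]
Centering by an orthogonal (Helmert) transformation, $\tilde{\bm Z}^\top\tilde{\bm Z}\sim\mathrm{Wishart}(n-1,I_p)$. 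The $1/n$ term is negligible, so the main task reduces to the centered quadratic form.

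\textbf{Step 2: use a leave-one-out representation.} Write $\bm A=\tilde{\bm Z}^\top\tilde{\bm Z}$ and $\bm A_{(i)}=\bm A-\tilde{\bm z}_i\tilde{\bm z}_i^\top$. Sherman--Morrison gives $h^\circ_{ii}=q_i/(1+q_i)$ with $q_i=\tilde{\bm z}_i^\top\bm A_{(i)}^{-1}\tilde{\bm z}_i$. I would handle centering either by a rotation argument or by noting that $\bar{\bm z}$ perturbs the statistics only by $O_\P(\sqrt{p\log n}/n)$ uniformly, and then argue as if $\tilde{\bm z}_i$ were independent of $\bm A_{(i)}$ (exactly so for the uncentered version). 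Conditional on $\bm A_{(i)}$, the form $q_i$ is a Gaussian quadratic form with mean $\mathrm{tr}(\bm A_{(i)}^{-1})$. By Hanson--Wright,
\[
|q_i-\mathrm{tr}(\bm A_{(i)}^{-1})|\lesssim \|\bm A_{(i)}^{-1}\|_F\sqrt{t}+\|\bm A_{(i)}^{-1}\|t
\]
with probability at least $1-2e^{-t}$. Bai--Yin (or Davidson--Szarek) gives, with overwhelming probability, $\lambda_{\min}(\bm A_{(i)})\ge n(1-\sqrt{\kappa_\star})^2/2$ uniformly in $i$, because every $\bm A_{(i)}\succeq$ the smallest eigenvalue argument applied to $n-1$ rows. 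Then $\|\bm A_{(i)}^{-1}\|_F\lesssim\sqrt p/n$ and $\|\bm A_{(i)}^{-1}\|\lesssim1/n$. Taking $t=3\log n$ and a union bound over $n$ indices makes the fluctuations $O(\sqrt{p\log n}/n+\log n/n)=o(1)$ uniformly.

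\textbf{Step 3: identify the trace.} The inverse-Wishart mean is $\E\,\mathrm{tr}(\bm A_{(i)}^{-1})=p/(n-p-2)\to\kappa_\star/(1-\kappa_\star)$, and Gaussian concentration of the Lipschitz spectral functional on the high-probability event makes it concentrate at rate $O(\sqrt{\log n}/n)$. Equivalently, the deterministic identity $\sum_i h_{ii}=p+1$ pins the common centering value. Hence $\max_i|q_i-\kappa_N/(1-\kappa_N)|\to0$ in probability, and applying the continuous map $q\mapsto q/(1+q)$ gives $\max_i|h_{ii}-\kappa_N|\to0$. The last step is to transfer this back from the conditional fixed-$n$ statement to the Bernoulli sample, using that the bounds are uniform in $n$ near $N\pi_N$.

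The main obstacle is the union bound in Step 2. It needs exponential, not merely Chebyshev, tail bounds for the quadratic forms, and it needs the smallest-eigenvalue control to hold simultaneously for all leave-one-out matrices. Dominating every $\bm A_{(i)}$ by the single $(n-1)$-row minimum-eigenvalue event, or simply noting $\bm A_{(i)}\succeq\bm A-\tilde{\bm z}_i\tilde{\bm z}_i^\top$ together with the Bai--Yin bound on $n-1$ rows plus a union bound of $n$ exponentially small probabilities, is what I would try first. The dependence introduced by centering is a secondary technical nuisance.
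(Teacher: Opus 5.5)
Your argument is correct in outline, but it takes a different route from the paper. Your Step~1 is exactly the paper's orthogonal split $\boldsymbol P(E+\mathrm{Col}(\bm G))=\boldsymbol P(E)+\boldsymbol P(\boldsymbol P(E^\perp)\bm G)$ with $E=\mathrm{span}(\mathbf 1_n)$ (Lemma~\ref{lemma5}). From there the paper uses Gaussian rotational invariance to get the \emph{exact} law $h_{ii,N}\overset{d}{=}n_N^{-1}+(1-n_N^{-1})B_i$, with $B_i\mid n_N\sim\cB\bigl(p_N/2,(n_N-1-p_N)/2\bigr)$. A scalar Bernstein-type bound for the Beta distribution then gives $e^{-C_\epsilon n_N}$ tails. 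A union bound and an explicit Binomial computation of $\E[n_N e^{-C_\epsilon n_N}]$ finish the proof. You replace the exact law with Sherman--Morrison, Hanson--Wright, a smallest-singular-value bound and trace concentration.

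The paper's proof is far shorter but is tied to rotational invariance. Yours never uses the exact distribution. If you also pin the centering value through $\sum_i h_{ii,N}=p_N+1$, it is the kind of argument that could extend to independent sub-Gaussian coordinates, such as the uniform design in the simulations; the paper says that case would need a different strategy. For the pinning, the algebraic inequality $0\le\mathrm{tr}(\bm A_{(i)}^{-1})-\mathrm{tr}(\bm A^{-1})\le\|\tilde{\bm z}_i\|^2\|\bm A_{(i)}^{-1}\|^2=\mathcal O_\P(n_N^{-1})$, uniformly in $i$, already shows that all leave-one-out traces share a common centering. So the Lipschitz-concentration step is not even needed.

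Two steps need tightening.

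First, the centering dependence is not cosmetic. Since $\tilde{\bm z}_i=-\sum_{j\neq i}\tilde{\bm z}_j$ is a function of the rows forming $\bm A_{(i)}$, the independence behind Hanson--Wright fails. Your union bound only uses marginal tail bounds, so you can repair this index by index. Fix $i$, let $Q$ be an orthonormal basis of $\mathbf 1_n^\perp$, and rotate $Q^\top\mathbf e_i$ onto a coordinate axis. Then run your argument on a fresh $(n_N-1)\times p_N$ Gaussian matrix. At that point, though, the paper's exact Beta law is one line away.

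Second, the lower bound on $\lambda_{\min}(\bm A_{(i)})$ cannot be inherited from $\bm A$, because $\bm A_{(i)}\preceq\bm A$; the statement $\bm A_{(i)}\succeq\bm A-\tilde{\bm z}_i\tilde{\bm z}_i^\top$ you wrote is a tautology. The bound must come from a per-index non-asymptotic Davidson--Szarek estimate on $n_N-1$ rows, combined with a union bound over $n_N$ exponentially small failure probabilities. Bai--Yin alone, being an asymptotic almost-sure result, does not supply those tails.
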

\begin{proof}
    See Appendix \ref{proof:lemma:unifCV}.
\end{proof}

Lemma \ref{lemma:unifCV}, which could be of independent interest, shows that uniform convergence of the leverages extends to the case $\kappa_\star>0$, assuming a Gaussian design. In particular, this shows that Assumption (\ref{A2}) holds under appropriate settings. 

    \subsection{Asymptotically unbiased variance estimation}\label{sec:bias}

In the next result, we determine a closed-form expression for the asymptotic bias of the LOO variance estimator. 

\begin{result} \label{res:asympbiasLOO}
 Assume (\ref{A1}) and (\ref{A2}). Then,
 \begin{equation} \label{eq:ABias}
     \frac{\E_m\!\left(\widehat{V}_{loo}\right)}
{\V_m\!\left(\widehat{\mu}_{greg}\right)}
=
\left(
\frac{1 - \pi_\star}{1 - \kappa_\star} + \pi_\star
\right)
\bigg(
\frac{1}{N} \sum_{i \in U_N} g_{i,N}
\bigg)^{-1} + o_\P(1).
 \end{equation}
\end{result}
\begin{proof}
    See Appendix \ref{proof:asympbiasLOO}.
\end{proof}

Although interesting, the expression \eqref{eq:ABias} is difficult to interpret since it depends on the behavior of the population average of the g-weights $$\overline{G}_N := \frac{1}{N} \sum_{i \in U_N} g_{i,N}.$$ This quantity arises naturally in the high-dimensional asymptotic analysis of various variance estimators \citep{eustache2025high}. Our next lemma analyzes its low and high-dimensional behavior.

    \begin{lemma} \label{lemma:Gweights}
Assume (\ref{A1}). Then, the following statements hold. 
        \begin{enumerate}
        \item[(a)] The average g-weights $\overline{G}_N$ satisfies $$\liminf_{N \to \infty} \overline{G}_N \geq 1 +o_\P(1).$$
            \item[(b)] Assume that
\begin{equation}
\bigg\rVert \dfrac{1}{N}\sum_{i\in S_N} \dfrac{\bx_{i}}{\pi_i} - \dfrac{1}{N}\sum_{i\in U_N} \bx_{i} \bigg\rVert_2= \mathcal{O}_\P\left(\sqrt{\dfrac{p_N}{N}}\right),
\end{equation}
and that there exists a constant $c>0$ such that $$\lim_{N \to \infty} \mathbb{P}\!\left( \lambda_{\min}(\boldsymbol{A}_\Pi/N) \ge c\,
\right) = 1.$$
            Then, provided that $\sqrt{\kappa_N} \max_{i\in U_N} ||\bx_i|| = o_\P(1)$, it holds that $$\max_{i\in U_N} \rvert g_{i,N} - 1\rvert  \xrightarrow[N \to \infty]{\P}0.$$  Consequently, $$\overline{G}_N \xrightarrow[N \to \infty]{\P} 1.$$

            \item[(c)] Assume (\ref{C1}). Then, 
            $$\overline{G}_N \xrightarrow[N \to \infty]{\P} \dfrac{1-\pi_\star}{1-\kappa_\star} + \pi_\star.$$
        \end{enumerate}
    \end{lemma}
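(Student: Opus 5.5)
Throughout, write $\boldsymbol{M}:=\boldsymbol{A}_\Pi=\pi_N^{-1}\sum_{i\in S_N}\bx_i\bx_i^\top$ and $\widehat{\mathbf{t}}_{\bx}:=\sum_{i\in S_N}\bx_i/\pi_N$. The whole argument rests on two algebraic facts. First, the intercept gives $\boldsymbol{M}\mathbf{e}_0=\widehat{\mathbf{t}}_{\bx}$, where $\mathbf{e}_0$ is the first canonical vector, so $\widehat{\mathbf{t}}_{\bx}^\top\boldsymbol{M}^{-1}\bx_i=x_{i0}=1$. Second, $\overline{G}_N=N^{-1}\mathbf{t}_{\bx}^\top\boldsymbol{M}^{-1}\mathbf{t}_{\bx}$.

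\emph{Part (a).} I would use calibration. Since $\sum_{i\in S_N}g_{i,N}\bx_i/\pi_N=\mathbf{t}_{\bx}$, the intercept coordinate gives $\sum_{i\in S_N}g_{i,N}/\pi_N=N$. Moreover,
\[
\sum_{i\in S_N}g_{i,N}^2/\pi_N=\mathbf{t}_{\bx}^\top\boldsymbol{M}^{-1}\boldsymbol{M}\boldsymbol{M}^{-1}\mathbf{t}_{\bx}=N\overline{G}_N .
\]
Cauchy--Schwarz then yields $N^2\le (n_N/\pi_N)\,N\overline{G}_N$, that is, $\overline{G}_N\ge N\pi_N/n_N$. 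Under (\ref{A1}), $n_N$ is binomial with mean $N\pi_N\to\infty$, so $N\pi_N/n_N=1+o_\P(1)$, which gives (a).

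\emph{Part (b).} Set $\mathbf{d}:=\mathbf{t}_{\bx}-\widehat{\mathbf{t}}_{\bx}$. By the first algebraic fact, $g_{i,N}-1=\mathbf{d}^\top\boldsymbol{M}^{-1}\bx_i$ for every $i\in U_N$. Hence
\[
\max_{i\in U_N}|g_{i,N}-1|\le \frac{\|\mathbf{d}/N\|_2}{\lambda_{\min}(\boldsymbol{M}/N)}\max_i\|\bx_i\| = \mathcal{O}_\P\big(\sqrt{p_N/N}\big)\,c^{-1}\max_i\|\bx_i\|.
\]
Since $p_N/N=\kappa_N n_N/N\le\kappa_N$, this bound is $o_\P(1)$ under the stated condition. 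Averaging over $U_N$ then gives $\overline{G}_N\to1$.

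\emph{Part (c).} Expanding $\mathbf{t}_{\bx}=\widehat{\mathbf{t}}_{\bx}+\mathbf{d}$ and using $\boldsymbol{M}^{-1}\widehat{\mathbf{t}}_{\bx}=\mathbf{e}_0$ gives the exact identity
\[
\overline{G}_N=1+N^{-1}\mathbf{d}^\top\boldsymbol{M}^{-1}\mathbf{d}.
\]
The target limit minus one equals $(1-\pi_\star)\kappa_\star/(1-\kappa_\star)$, so it suffices to prove $N^{-1}\mathbf{d}^\top\boldsymbol{M}^{-1}\mathbf{d}\to(1-\pi_\star)\kappa_\star/(1-\kappa_\star)$.

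To do so, write $\mathbf{a}:=\sum_{i\in U_N\setminus S_N}\bx_i=(N-n_N,\mathbf{s}^\top)^\top$, where $\mathbf{s}$ is the sum of the non-intercept covariates off the sample. Since $\sum_{i\in S_N}\bx_i=\pi_N\boldsymbol{M}\mathbf{e}_0$, we have $\mathbf{d}=\mathbf{a}-(1-\pi_N)\boldsymbol{M}\mathbf{e}_0$. Expanding the quadratic form,
\[
\mathbf{d}^\top\boldsymbol{M}^{-1}\mathbf{d}=\mathbf{a}^\top\boldsymbol{M}^{-1}\mathbf{a}-2(1-\pi_N)(N-n_N)+(1-\pi_N)^2n_N/\pi_N .
\]
Conditionally on $S_N$, Assumption (\ref{C1}) makes $\mathbf{s}\sim\mathcal{N}(0,(N-n_N)I_{p_N})$ independent of $\boldsymbol{M}$. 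I then expand $\mathbf{a}^\top\boldsymbol{M}^{-1}\mathbf{a}$ using the block inverse of $\boldsymbol{M}$, with $\boldsymbol{W}$ the centered sample Wishart matrix and $\bar{\mathbf{z}}$ the sample mean of the non-intercept covariates:
\begin{enumerate}
\item The $(0,0)$ entry is $[\boldsymbol{M}^{-1}]_{00}=\pi_N n_N^{-1}\big(1+n_N\bar{\mathbf{z}}^\top\boldsymbol{W}^{-1}\bar{\mathbf{z}}\big)$.
\item Because $n_N\bar{\mathbf{z}}^\top\boldsymbol{W}^{-1}\bar{\mathbf{z}}\to\kappa_\star/(1-\kappa_\star)$, this entry is $\pi_N/(n_N(1-\kappa_\star))(1+o_\P(1))$.
\item The cross term $2(N-n_N)[\boldsymbol{M}^{-1}]_{0\mathbf{z}}\mathbf{s}$ has conditional mean zero and is negligible after dividing by $N$.
\item The $\mathbf{zz}$ block is $\pi_N\boldsymbol{W}^{-1}$, so by Hanson--Wright $\mathbf{s}^\top\pi_N\boldsymbol{W}^{-1}\mathbf{s}=(N-n_N)\pi_N\operatorname{tr}(\boldsymbol{W}^{-1})(1+o_\P(1))$.
\item The inverse-Wishart trace satisfies $\operatorname{tr}(\boldsymbol{W}^{-1})\approx p_N/(n_N-1-p_N)\to\kappa_\star/(1-\kappa_\star)$.
\end{enumerate}
Dividing by $N$ and using $n_N/N\to\pi_\star$, the three contributions are
\[
\frac{(1-\pi_\star)^2}{1-\kappa_\star}+\frac{(1-\pi_\star)\pi_\star\kappa_\star}{1-\kappa_\star}-2(1-\pi_\star)^2+(1-\pi_\star)^2=\frac{(1-\pi_\star)\kappa_\star}{1-\kappa_\star},
\]
as required.

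The main obstacle is the rigorous random-matrix step. This requires concentration of $\operatorname{tr}(\boldsymbol{W}^{-1})$ and of $n_N\bar{\mathbf{z}}^\top\boldsymbol{W}^{-1}\bar{\mathbf{z}}$ at rate $o_\P(1)$ when $\kappa_\star>0$, with the random Bernoulli size $n_N$ handled by conditioning on $S_N$. I would use the independence of $\bar{\mathbf{z}}$ and $\boldsymbol{W}$ under Gaussianity, the exact inverse-Wishart moments, and $\lambda_{\min}(\boldsymbol{W}/n_N)\ge(1-\sqrt{\kappa_\star})^2-o_\P(1)$ (Bai--Yin) to control the variances of these quadratic forms.
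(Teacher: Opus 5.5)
Your proposal is correct and follows essentially the paper's route. Part (a) is the same Cauchy--Schwarz argument through the intercept, written on the sample sums $\sum_{i\in S_N} g_{i,N}/\pi_N = N$ instead of on $\boldsymbol{A}_\Pi^{\pm 1/2}$, and (b) is the identical bound. In (c) your nontrivial term $N^{-1}\mathbf{a}^\top\boldsymbol{M}^{-1}\mathbf{a}$ is exactly the paper's $A_3(S_N^c)$, handled by the same block inversion; you take the Schur complement of the intercept entry (centered Wishart $\boldsymbol{W}$, independent of $\bar{\mathbf{z}}$), whereas the paper uses $\Delta_N$ with the Beta and Hotelling distributions.

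One small correction: under Bernoulli sampling $\mathbf{e}_0^\top\mathbf{d} = N - n_N/\pi_N \neq 0$. The identity is therefore $\overline{G}_N = 1 + \bigl(1 - n_N/(N\pi_N)\bigr) + N^{-1}\mathbf{d}^\top\boldsymbol{M}^{-1}\mathbf{d}$ rather than the exact form you state, but the extra term is $o_\P(1)$ and your limit is unaffected.
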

\begin{proof}
    See Appendix \ref{lproof:lemma:Gweights}.
\end{proof}

\begin{remark}
Part (b) relies on the condition $\sqrt{\kappa_N} \max_{i \in U_N} \|\bx_i\| = o_\P(1)$, which holds in low-dimensional regimes where $\kappa_N \to 0$, but fails when $\kappa_N \to \kappa_\star > 0$. In the Gaussian setting of (b), using Theorem 3.1.1 of \citet{vershynin2025high}, it can be shown that
\[
\max_{1 \le i \le N} \|x_i\|_2
= \sqrt{p_N} + \mathcal{O}_\P\big( \sqrt{ \log N} \big).
\]
so that $\sqrt{\kappa_N} \max_{i \in U_N} \|\bx_i\|$  vanishes only when $p_N/n_N^{1/2} \to 0$, which does not hold when $\kappa_*>0$. Part (c) characterizes the limit of $\overline{G}_N $ in this regime.
\end{remark}


Lemma \ref{lemma:Gweights} shows that the g-weights have different behaviors in low and high dimensions. In general, (a) shows that the (low and high-dimensional) limit of $\overline{G}_N$ is always greater or equal to $1$. In the low-dimensional case, the g-weights converge uniformly to $1$, which explains why the Taylor variance estimator $\widehat{V}_{tay}$ and its g-weighted version $\widehat{V}_{g}$ in \eqref{eq:tay} and \eqref{eq:tayG}, respectively, share the same asymptotic behavior and performances. When $\kappa_\star>0$, on the other hand, the behavior of the average weights $\overline{G}_N$ changes from converging to $1$ to converging to a function of $\kappa_\star,$ often greater than $1$; this is highlighted in  (ii) in the Gaussian case. 

Our next corollary leverages the high-dimensional characterization of $\overline{G}_N$ to derive closed-form formulas for the asymptotic biases of $\widehat{V}_{tay}$,$\widehat{V}_{g}$ and $\widehat{V}_{loo}$. 

\begin{corollary}  \label{cor:asympbiasLOO}
   Assume (\ref{A1}) and (\ref{C1}). Then, the following statements hold.
    
    \begin{enumerate}
        \item[(i)] The estimator $\widehat{V}_{tay}$ satisfies 
\begin{equation}\label{eq:biasTaylor}
    \frac{\E_m\!\left(\widehat{V}_{tay}\right)}
{\V_m\!\left(\widehat{\mu}_{greg}\right)}
\xrightarrow[N \to \infty]{\P} \frac{(1-\kappa_\star)(1-\kappa_\star(1-\pi_\star))}{1 - \pi_\star \kappa_\star}.
\end{equation}
\item[(ii)] The estimator $\widehat{V}_{g}$ satisfies 
\begin{equation}\label{eq:biasG}
    \frac{\E_m\!\left(\widehat{V}_{g}\right)}
{\V_m\!\left(\widehat{\mu}_{greg}\right)}
\xrightarrow[N \to \infty]{\P} \frac{(1-\kappa_\star)(1-\pi_\star\kappa_\star(1-\pi_\star))}{1 - \pi_\star \kappa_\star}.
\end{equation}
        \item[(iii)] The estimator $\widehat{V}_{loo}$ satisfies 
        $$\frac{\E_m\!\left(\widehat{V}_{loo}\right)}
{\V_m\!\left(\widehat{\mu}_{greg}\right)}
\xrightarrow[N \to \infty]{\P} 1.$$
    \end{enumerate}
    \end{corollary}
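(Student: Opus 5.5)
Part (iii) follows by combining earlier results; parts (i) and (ii) need asymptotic bias formulas for the Taylor estimators. We start with (iii). By Lemma \ref{lemma:unifCV}, assumptions (\ref{A1}) and (\ref{C1}) imply (\ref{A2}), so Result \ref{res:asympbiasLOO} applies. It gives
\[
\frac{\E_m(\widehat{V}_{loo})}{\V_m(\widehat{\mu}_{greg})} = \left(\frac{1-\pi_\star}{1-\kappa_\star}+\pi_\star\right)\overline{G}_N^{-1} + o_\P(1).
\]
Lemma \ref{lemma:Gweights}(c) gives $\overline{G}_N \to (1-\pi_\star)/(1-\kappa_\star)+\pi_\star$ in probability. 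This limit is at least $1$, so it is bounded away from zero. The continuous mapping theorem then shows the ratio tends to $1$.

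For (i) and (ii), we would go back to the computation behind Result \ref{res:asympbiasLOO}, which we expect to be the same for all three estimators. Under Bernoulli sampling, $\Delta_{ij}=0$ for $i\neq j$ and $\Delta_{ii}/\pi_{ii}=1-\pi_N$. Each estimator therefore reduces to a single sum,
\[
\frac{1-\pi_N}{N^2\pi_N^2}\sum_{i\in S_N} c_i^2\,\widehat{\epsilon}_i^2 ,
\]
where $c_i$ equals $1$, $g_{i,N}$, or $(1-h_{ii,N})^{-1}$ for the Taylor, g-weighted and LOO estimators respectively.

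With $\pi_i\equiv\pi_N$, the weights cancel inside $\widehat{\bbeta}_N$, so the fit is an OLS projection. Hence $\E_m[\widehat{\epsilon}_i^2]=\sigma^2(1-h_{ii,N})$ conditionally on the sample. Adding $\sigma^2/N$, the model-expected estimators become:
\begin{itemize}
\item Taylor: $\E_m(\widehat{V}_{mp})= \sigma^2(1-\pi_N)(1-\kappa_N)n_N/(N\pi_N)^2+\sigma^2/N + o$ terms, using (\ref{A2}) to replace each $h_{ii,N}$ by $\kappa_N$ uniformly.
\item LOO: the factor $(1-\kappa_N)$ becomes $(1-\kappa_N)^{-1}$.
\item g-weighted: we need $n_N^{-1}\sum_{i\in S_N} g_{i,N}^2(1-h_{ii,N})$.
\end{itemize}
Since $n_N/(N\pi_N)\to1$, each expected estimator, multiplied by $N$, converges to $\sigma^2[(1-\pi_\star)a/\pi_\star+1]$. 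Here $a$ is the estimator-specific factor, and $a=(1-\kappa_\star)^{-1}$ for LOO.

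The denominator $N\,\V_m(\widehat{\mu}_{greg})$ is identified by the LOO identity: it must equal $\sigma^2[(1-\pi_\star)/(\pi_\star(1-\kappa_\star))+1]$ divided by $\overline{G}_\star$. This limit can be simplified using Lemma \ref{lemma:Gweights}(c). We would then check algebraically that
\[
\frac{(1-\pi_\star)(1-\kappa_\star)/\pi_\star+1}{(1-\pi_\star)/(\pi_\star(1-\kappa_\star))+1}\,\overline{G}_\star
\]
reduces to $(1-\kappa_\star)(1-\kappa_\star(1-\pi_\star))/(1-\pi_\star\kappa_\star)$. This uses $\overline{G}_\star=(1-\pi_\star\kappa_\star)/(1-\kappa_\star)$, which follows from $(1-\pi_\star)/(1-\kappa_\star)+\pi_\star=(1-\pi_\star\kappa_\star)/(1-\kappa_\star)$. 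This is a routine simplification.

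The main obstacle is (ii). It requires the limit of the leverage-weighted second moment of the g-weights, $n_N^{-1}\sum_{i\in S_N} g_{i,N}^2$, under (\ref{C1}). We would write $g_{i,N}=\mathbf{t}_{\bx}^\top\boldsymbol{A}_\Pi^{-1}\bx_i$, so that
\[
\sum_{i\in S_N} g_{i,N}^2/\pi_N=\mathbf{t}_{\bx}^\top\boldsymbol{A}_\Pi^{-1}\mathbf{t}_{\bx}.
\]
Thus $n_N^{-1}\sum_{i\in S_N} g_{i,N}^2\approx N^{-1}\mathbf{t}_{\bx}^\top\boldsymbol{A}_\Pi^{-1}\mathbf{t}_{\bx}$, which is exactly $\overline{G}_N$ because $\sum_{i\in U_N} g_{i,N}=\mathbf{t}_{\bx}^\top\boldsymbol{A}_\Pi^{-1}\mathbf{t}_{\bx}$. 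Combined with (\ref{A2}), this gives $a=(1-\kappa_\star)\overline{G}_\star$. The same algebra then yields \eqref{eq:biasG}, which should be verified by direct simplification.
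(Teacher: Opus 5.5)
\textbf{Gap: the denominator $\V_m(\widehat{\mu}_{greg})$ is misidentified, so your claimed simplification in (i) and (ii) is false.}

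Your proof of (iii) is the paper's. For (i)--(ii), recomputing the numerators from the Bernoulli reduction is a legitimate and more self-contained route than the paper's, which cites Result~4.1 of \citet{eustache2025high}. Your identity $\sum_{i\in S_N}g_{i,N}^2/\pi_N=\mathbf{t}_{\bx}^\top\boldsymbol{A}_\Pi^{-1}\mathbf{t}_{\bx}=N\overline{G}_N$ is exactly the Technical Lemma~2 step the paper uses for (ii).

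The problem is the denominator. You claim $N\,\V_m(\widehat{\mu}_{greg})\to\sigma^2[(1-\pi_\star)/(\pi_\star(1-\kappa_\star))+1]/\overline{G}_\star$, and the ``routine simplification'' you then assert does not hold. Since $(1-\pi_\star)/(\pi_\star(1-\kappa_\star))+1=\overline{G}_\star/\pi_\star$, your displayed expression equals $(1-\pi_\star)(1-\kappa_\star)+\pi_\star=1-\kappa_\star(1-\pi_\star)$. This differs from the target by the factor $\overline{G}_\star$: for $\kappa_\star=\pi_\star=1/2$ you get $3/4$, while the target is $1/2$. The same bookkeeping gives $1-\pi_\star\kappa_\star(1-\pi_\star)$ in (ii), again off by $\overline{G}_\star$.

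Your denominator also contradicts your own (iii): with it, $\E_m(\widehat{V}_{loo})/\V_m(\widehat{\mu}_{greg})$ would converge to $\overline{G}_\star\neq 1$. Solving Result~\ref{res:asympbiasLOO} for the denominator gives $\V_m(\widehat{\mu}_{greg})\approx\E_m(\widehat{V}_{loo})\,\overline{G}_N/\bigl((1-\pi_\star)/(1-\kappa_\star)+\pi_\star\bigr)$. Under (\ref{C1}) the last two factors cancel, so nothing is divided by $\overline{G}_\star$.

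\textbf{Repair.} You already have what you need. Since $\widehat{\mu}_{greg}=(N\pi_N)^{-1}\sum_{i\in S_N}g_{i,N}y_i$, your g-weight identity gives exactly $\V_m(\widehat{\mu}_{greg})=\sigma^2\overline{G}_N/(N\pi_N)$, the formula recalled at the start of the proof of Result~\ref{res:asympbiasLOO}.

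Normalize by $N\pi_N$ rather than $N$. This also covers $\pi_\star=0$, which (\ref{A1}) allows and where your limit $\sigma^2[(1-\pi_\star)a/\pi_\star+1]$ is infinite. The ratio is then $[(1-\pi_\star)a+\pi_\star]/\overline{G}_N+o_\P(1)$, which is precisely the formula the paper imports from \citet{eustache2025high}.

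Take $a=1-\kappa_\star$ for the Taylor estimator and $a=(1-\kappa_\star)\overline{G}_\star=1-\pi_\star\kappa_\star$ for the g-weighted one. Together with $\overline{G}_\star=(1-\pi_\star\kappa_\star)/(1-\kappa_\star)$ from Lemma~\ref{lemma:Gweights}(c), this simplifies to \eqref{eq:biasTaylor} and \eqref{eq:biasG}.
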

    
\begin{proof}
See Appendix \ref{proofCoro1}.
\end{proof}

The above corollary, in (iii) shows that, in the high-dimensional Gaussian regime, the LOO variance estimator is asymptotically unbiased. $\overline{G}_N$ precisely cancels the bias term in \eqref{eq:ABias}. This is therefore an attractive feature of this estimator, which is known to work well in similar scenarios where residuals overfitting may happen \citep{opsomer2005selecting, dagdoug2023model}. On the other hand, (i) and (ii) highlight that, in general, when $\kappa_\star>0$, the Taylor variance estimator $\widehat{V}_{tay}$ and its g-weighted version $\widehat{V}_{g}$  are negatively biased. Although their bias ratios may be somewhat difficult to interpret, the following observations can be made. If either $\kappa_\star = 0$ (low-dimensional case) or $\pi_\star =1$ (only a negligible part of the population is non-sampled), then both bias ratios are equal to $1$, meaning that the estimators are asymptotically unbiased in these settings. However, as soon as $\pi_*<1$ and $\kappa_\star>0$, both bias ratios are strictly less than 1, implying that both estimators $\widehat{V}_{tay}$ and $\widehat{V}_{g}$ are asymptotically (negatively) biased, and, under appropriate conditions, inconsistent. 
Moreover, for any $\pi_\star<1,$ both bias ratios are decreasing as $\kappa_\star$ increases, showing that higher dimensional problems lead to more severe variance underestimations for $\widehat{V}_{tay}$ and $\widehat{V}_{g}$.

    \section{A simulation study: empirical and theoretical behaviors} \label{sec:simu}

    In this section, we present the results of a simulation study comparing the finite-sample performance of the variance estimators $\widehat{V}_{\mathrm{tay}}$, $\widehat{V}_{\mathrm{g}}$, and $\widehat{V}_{loo}$ with the asymptotic benchmark developed in this article.

We generated a finite population of size $N = 1000$ with $X_1, \ldots, X_{p_N} \overset{i.i.d.}{\sim} \mathcal{U}([-1,1])$. To study the effect of the dimension, we considered scenarios with a varying number of covariates,
$p_N = \lfloor \kappa_N \cdot \pi_N \cdot N \rfloor,$
where $\kappa_N \in \{0.1, 0.2, \dots, 0.8\}$ and $\pi_N = n_N/N = 0.3$.

For each scenario, we performed a Monte Carlo simulation study with $B = 1000$ repetitions of the following steps.
\begin{enumerate}
    \item[(i)] Generating the variable of interest $Y$ according to
    \[
      y_i = \bm{x}_i^\top \boldsymbol{\beta} + \epsilon_i,
      \qquad i \in \mathcal{U}_N,
    \]
    where $\boldsymbol{\beta} = \mathbf{1}_{p+1}$, $\epsilon_i \overset{\mathrm{i.i.d.}}{\sim} \mathcal{N}(0, 1)$, and $\bm{x}_i \perp\!\!\!\perp \epsilon_i$ for $i \in \mathcal{U}_N$.

    \item[(ii)] Drawing a sample $S_N$ according to either Simple Random Sampling Without Replacement (SRSWOR) or Bernoulli sampling.

    \item[(iii)] Computing the GREG estimator $\widehat{\mu}_{greg}$ and the three variance estimators considered: the leave-one-out estimator $\widehat{V}_{loo}$, the standard Taylor estimator $\widehat{V}_{tay}$, and the g-weighted Taylor estimator $\widehat{V}_{g}$.
\end{enumerate}

All three estimators were evaluated relative to the design-based variance of $\hat{\mu}_{\mathrm{greg}}$. For each estimator $\widehat{V} \in \{\widehat{V}_{loo}, \widehat{V}_{tay}, \widehat{V}_{g}\}$, we computed their Monte-Carlo Relative Bias (RB) defined as
$$  \mathrm{RB}(\widehat{V})
  \;=\;
  100 \times
  \frac{1}{B}
  \sum_{b=1}^{B}
  \frac{\widehat{V}^{(b)} -  V_{MC} }{V_{MC}}
  \quad (\%),$$
where $V_{MC}$ denotes the Monte Carlo approximation of the total variance of $\hat{\mu}_{\mathrm{greg}}$, computed across the $B$ repetitions. A value $\mathrm{RB}(\widehat{V}) = 0$ corresponds to empirical unbiasedness, while negative values indicate underestimation and positive values indicate overestimation, all expressed in percentages.

To assess the quality of the technique of analysis used in the article, and the usefulness of the formulas provided by Corollary~\ref{cor:asympbiasLOO}, we compared the empirical biases of the variance estimators with the theoretical asymptotic biases given in Corollary~\ref{cor:asympbiasLOO}. Figure~\ref{fig:relative_bias} displays $\mathrm{RB}(\widehat{V}_{loo})$, $\mathrm{RB}(\widehat{V}_{tay})$, and $\mathrm{RB}(\widehat{V}_{g})$ as functions of $\kappa_N$.\\

\begin{figure}[h]
    \centering

    \begin{subfigure}[t]{0.48\textwidth}
        \centering
        \includegraphics[width=\linewidth]{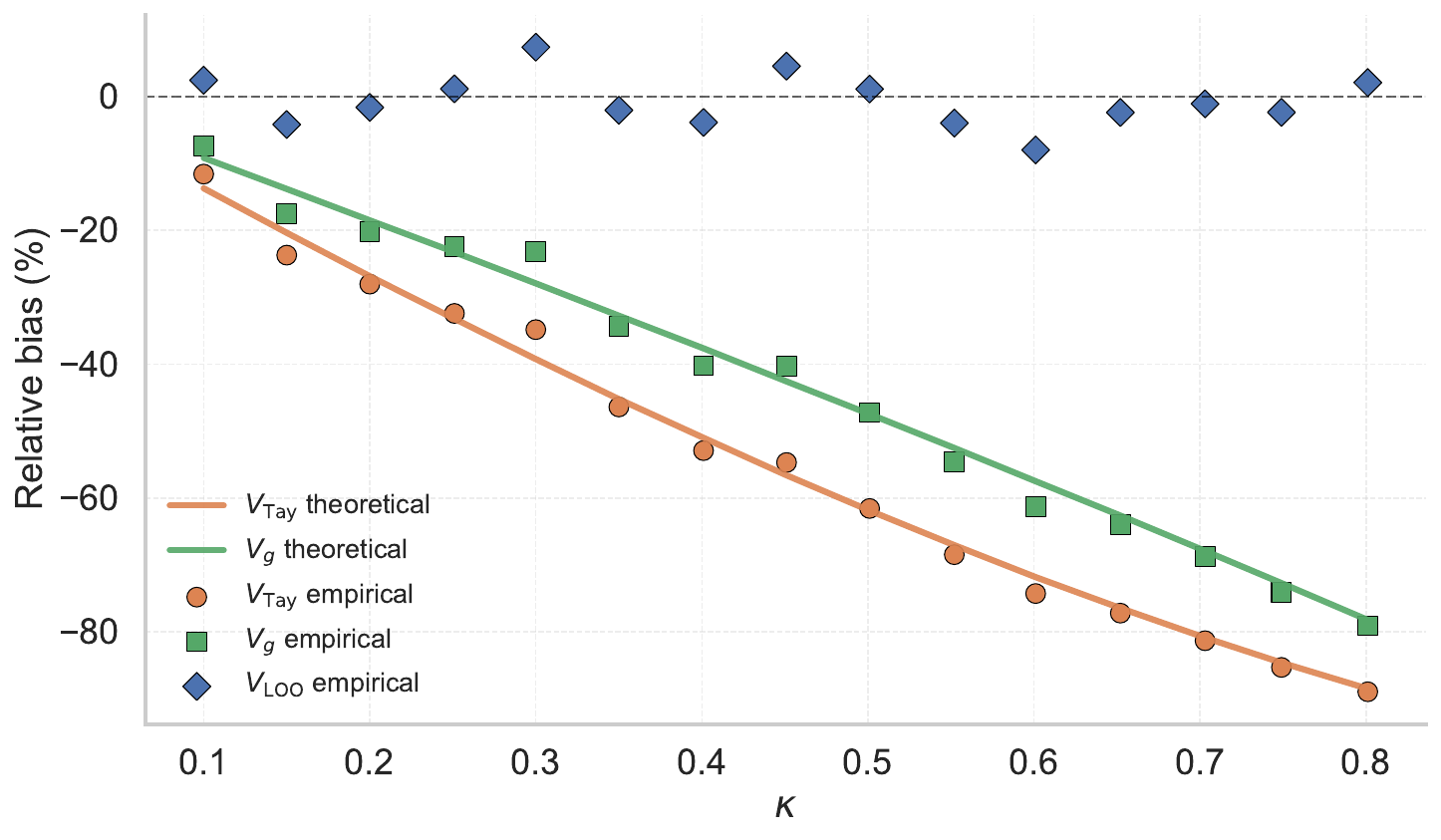}
        \caption{Bernoulli Sampling}
        \label{fig:rb_bern}
    \end{subfigure}
    \hfill
    \begin{subfigure}[t]{0.48\textwidth}
        \centering
        \includegraphics[width=\linewidth]{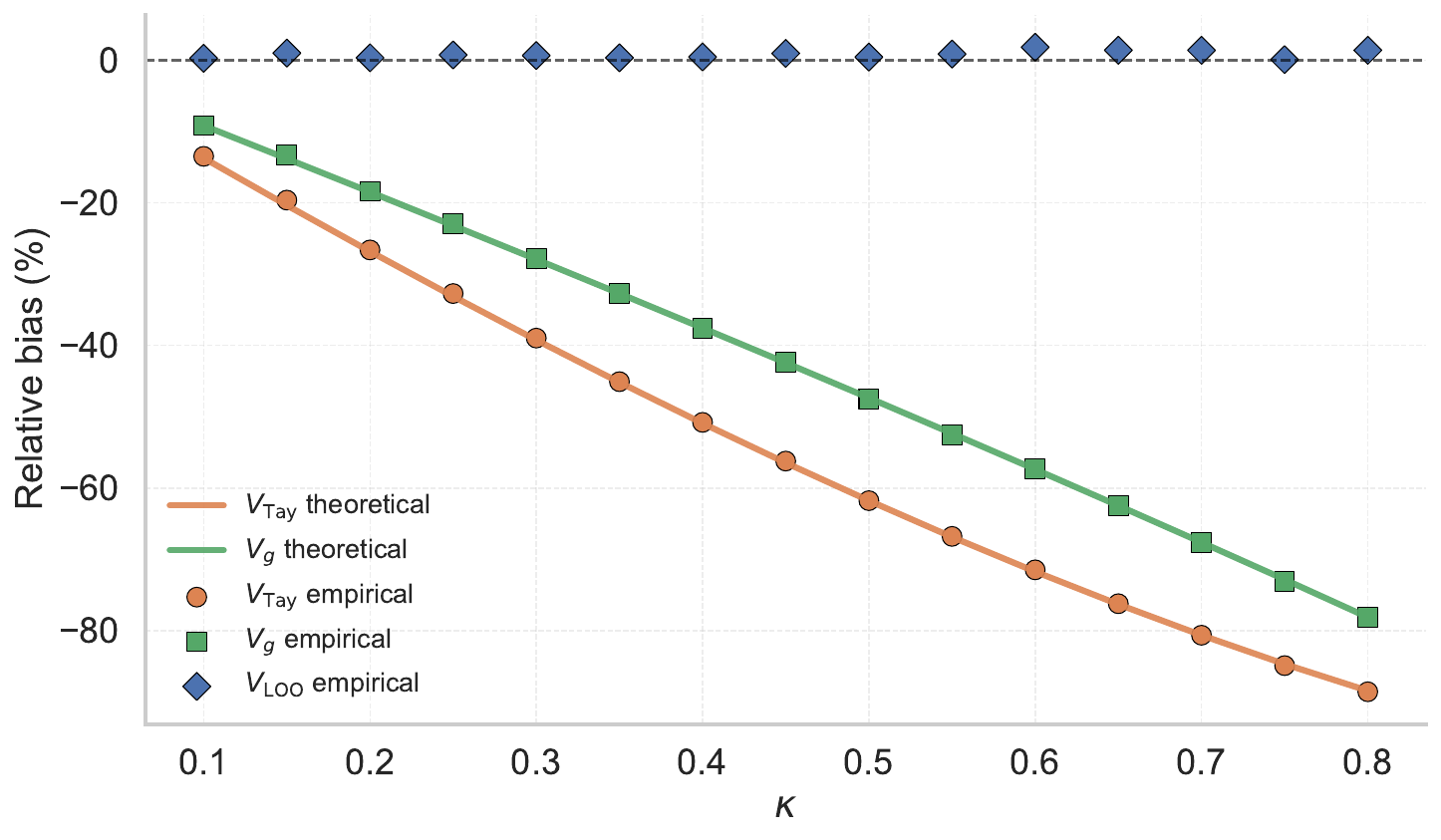}
        \caption{SRSWOR Sampling}
        \label{fig:rb_srswor}
    \end{subfigure}

    \caption{Relative bias (in \%) of the variance estimators $\widehat{V}_{loo}$, $\widehat{V}_{tay}$, and $\widehat{V}_{g}$ as a function of $\kappa_N = p_N / \mathbb{E}(n_N)$ under Bernoulli and SRSWOR sampling designs. The empirical relative biases were represented by the points, while the solid lines correspond to the theoretical curves derived from the asymptotic expressions in \eqref{eq:biasTaylor} and \eqref{eq:biasG}.
    \label{fig:relative_bias}}
\end{figure}

Overall, the results for both sampling designs were in line with the asymptotic results summarized in Corollary~\ref{cor:asympbiasLOO}.
The relative bias of $\widehat{V}_{loo}$ remained close to zero for all values of $\kappa_N$, without requiring any debiasing step depending on the dimension. This seems to confirm that, for SRSWOR and Bernoulli, $\widehat{V}_{loo}$ is a reliable variance estimator of $\widehat{\mu}_{greg}$, independently of the dimension. In contrast, $\widehat{V}_{tay}$ and $\widehat{V}_{g}$ both showed an increasingly negative bias as $\kappa_N$ increased. In particular, the solid curves in Figure~\ref{fig:relative_bias}, which represent the theoretical formulas for the asymptotic biases of $\widehat{V}_{tay}$ and $\widehat{V}_{g}$, agreed well with their empirical behavior. This held for both Bernoulli sampling and SRSWOR sampling.

Finally, we note that similar overall patterns were observed for both sampling designs when using uniform covariates, even though the theory was proved under the Gaussian setting. This suggests that our results may be robust to more general settings.

    \section{Final remarks}

In this article, we studied variance estimation for the GREG estimator in regimes where the number of covariates is not necessarily negligible compared with the sample size. Our results show that the three variance estimators considered, although closely related in form, can have very different high-dimensional behaviors. Under the assumptions of Corollary~\ref{cor:asympbiasLOO}, the leave-one-out estimator $\widehat{V}_{loo}$ is asymptotically unbiased in all dimensional regimes considered. In particular, it does not require any debiasing step depending on the dimension. This is to be contrasted with the standard Taylor estimator $\widehat{V}_{tay}$ and the g-weighted Taylor estimator $\widehat{V}_{g}$, which are negatively biased as soon as $\kappa^\star > 0$ and $\pi^\star < 1$, and this bias becomes more severe as the dimension increases.

The simulation study supports these theoretical findings. It also suggests that the conclusions may be more robust than what is covered by our current proofs. Indeed, similar patterns were observed with uniform covariates, although the main theoretical results were proved under Gaussian covariates. Moreover, the same qualitative behavior was obtained for Bernoulli sampling and SRSWOR. This leads us to believe that the interpretation of the results may extend, at least in spirit, to more general distributional settings and to other sampling designs with equal inclusion probabilities.

Several questions remain open. In particular, the present analysis does not cover unequal inclusion probability designs. In such settings, the behavior of the leverages, the g-weights, and the leave-one-out residuals may be quite different, and it is unclear whether the cancellation mechanism leading to the unbiasedness of $\widehat{V}_{loo}$ would still hold. This is a promising research direction for future work.

    \label{sec:final}
 	\bibliography{biblio}

	\appendix
    \section{Additional notation}
\paragraph{Linear algebra.}
The $i$-th canonical basis vector of $\R^n$ is written $\mathbf{e}_i$. We also denote by $\mathbf{1}_n := [1,...,1]^\top \in \R^n$ the vector of ones in $\R^n$. The span of a vector $\mathbf{u}$ is written $\text{span}(\mathbf{u})$ and the column space of a matrix $\bX$ is denoted $\text{Col}(\bX)$. For a subspace $\mathcal{S} \subset \R^n$, we denote by $\boldsymbol{P}(\mathcal{S})$ the orthogonal projection onto $\mathcal{S}$. In particular, for a matrix $\bX$, we write $\boldsymbol{P}(\bX) := \boldsymbol{P}(\text{Col}(\bX)) = \boldsymbol{X} (\boldsymbol{X}^\top \boldsymbol{X})^{-1} \boldsymbol{X}^\top$. We denote by $s_{\min}(\bX)$ the smallest singular value of a matrix $\bX$, and by $\lambda_{\min}(\boldsymbol{X})$ the smallest eigenvalue of a square matrix $\boldsymbol{X}$. 

\paragraph{Probability distributions.}
We denote by $\overset{d}{=}$ equality in distribution. We write: $Ber(p)$ for the Bernoulli distribution with parameter $p$; $Bin(n,p)$ for the Binomial distribution with parameters $(n,p)$; $\mathcal{N}(\bm{\mu}, \mathbf{\Sigma})$ for the multivariate normal distribution; $\chi^2_k$ for the chi-square distribution with $k$ degrees of freedom; $\mathcal{B}(\alpha,\beta)$ for the Beta distribution with parameters $(\alpha,\beta)$; $\mathcal{W}_p(n,\mathbf{\Sigma})$ for the Wishart distribution with $n$ degrees of freedom and scale matrix $\mathbf{\Sigma}$; $F_{d_1,d_2}$ for the Fisher distribution with $(d_1,d_2)$ degrees of freedom; and $T^2_{p,n}$ for the Hotelling distribution with parameters $(p,n)$.

\paragraph{Specific notation.} The covariates $ \bm{x}_i = (1, X_1, \ldots , X_p)$ decomposes as $ \bm{x}_i = (1, \widetilde{\bm{x}}_i)$. Define the matrix $\boldsymbol{X}_S \in \mathbb{R}^{n \times (p+1)}$ as the matrix whose i-th row is $\bm{x}_i$ and $\widetilde{\boldsymbol{X}}_S \in \mathbb{R}^{n \times p}$ as the matrix whose i-th row is $\widetilde{\bm{x}}_i.$ Similarly, define $\mathbf{t}_{x,S_N} = \sum_{i \in S_N} \bm{x}_i$, $\mathbf{t}_{x,S_N^c} = \sum_{i \in S_N^c} \bm{x}_i$, $\boldsymbol{A}_S = \sum_{i \in S_N}\bx_i \bx_i^\top$ and $\tilde{\mathbf{t}}_{x,S} = \sum_{i \in S_N} \widetilde{\bm{x}}_i$, $\widetilde{\boldsymbol{A}}_S = \sum_{i \in S_N}\widetilde{\bx}_i \widetilde{\bx}_i^\top$.

\begin{remark}
    In the following, although the sample size $n_N$ is random, a law of large numbers argument shows that $n_N/n_{exp,N} \xrightarrow[N \to \infty]{\P}1$, where $n_{exp,N} := N \pi_N$ denotes the expected sample size. For simplicity of notation, we sometimes give asymptotic orders in terms of $n_N$, instead of its deterministic equivalent $n_{exp,N}$. This abuse of notation however does not affect our results. 
\end{remark}

	\section{Proofs of the main results}

	\subsection{Proof of Result \ref{res:asympbiasLOO}} \label{proof:asympbiasLOO}

Recall from \cite{eustache2025high} that
\begin{align*}
\V_m\!\left(\widehat{\mu}_{\mathrm{greg}}\right)
&=
\frac{\sigma^2}{\pi_N N^2}
\sum_{i \in U_N} g_{i,N} .
\end{align*}

Under Bernoulli sampling, the estimator $\widehat{V}_{loo}$ reduces to
\[
\widehat{V}_{loo}
=
\frac{1}{N^2}
\frac{1-\pi_N}{\pi_N^2}
\sum_{i\in S_N}
\frac{\widehat{\epsilon}_{i }^{\ 2}}
{(1-h_{ii,N})^2}
+
\frac{\sigma^2}{N}. 
\]

\noindent
Moreover, $\V_m (\widehat{\epsilon}_i) =\sigma^2(1-h_{ii,N}),$ so that, taking expectation on both sides yields
\begin{align}
\E_m\!\left(\widehat{V}_{loo}\right)
&=
\frac{\sigma^2}{N^2}
\frac{1-\pi_N}{\pi_N^2}
\sum_{i\in S_N}
\frac{1}{1-h_{ii,N}}
+
\frac{\sigma^2}{N}. \label{eq:proof1}
\end{align}

\noindent
 For $i \in S_N$, a first-order Taylor expansion of $(1-h_{ii,N})^{-1}$ around $\kappa_N$ gives
\[
\frac{1}{1 - h_{ii,N}} 
= \frac{1}{1 - \kappa_N} 
+ \dfrac{h_{ii,N} - \kappa_N}{(1-\kappa_N)^2} + \mathcal{O}_\P\bigl(\max_{i \in S_N}|h_{ii,N}-\kappa_N|^2\bigr).
\]
\noindent
Averaging over $i \in S_N$ gives 
\begin{align*}
  \dfrac{1}{n_N}\sum_{i\in S_N}  \frac{1}{1 - h_{ii,N}} 
= \frac{1}{1 - \kappa_N}  
+   \dfrac{1}{n_N}\sum_{i\in S_N}\dfrac{h_{ii,N} - \kappa_N}{(1-\kappa_N)^2} + \mathcal{O}_\P\bigl(\max_{i \in S_N}|h_{ii,N}-\kappa_N|^2\bigr).
\end{align*}
Under \eqref{A2} and using that $\kappa_N \xrightarrow[N \to \infty]{} \kappa_\star$, this reduces to 
\begin{align*}
  \dfrac{1}{n_N}\sum_{i\in S_N}  \frac{1}{1 - h_{ii,N}} 
= \frac{1}{1 - \kappa_\star}  
+ o_\P\left( 1\right).
\end{align*}
Consequently, using \eqref{A1}, equation \eqref{eq:proof1} can be written
\[ 
\E_m\!\left(\widehat{V}_{loo}\right) = \frac{\sigma^2}{\pi_N N} \left(\frac{1-\pi_\star}{1-\kappa_\star} + \pi_\star\right) + o_\P\left(N^{-1}\right).
\]

\noindent
Finally, combining the two previous expressions leads to 
\[
\frac{\E_m\!\left(\widehat{V}_{loo}\right)}
     {\V_m\!\left(\widehat{\mu}_{greg}\right)}
=
\left(
\frac{1-\pi_\star}{1-\kappa_\star} + \pi_\star
\right)
\left(
\frac{1}{N}
\sum_{i \in U_N} g_{i,N}
\right)^{-1} + o_\P(1),
\]
since $(N^{-1}
\sum_{i \in U_N} g_{i,N}
)^{-1} = \mathcal{O}_\P(1)$ by an application of Lemma \ref{lemma:Gweights}.

\subsection{Proof of Corollary \ref{cor:asympbiasLOO} \label{proofCoro1}}

\subsubsection{Proof of (i)}
From Result 4.1 of  \cite{eustache2025high}, we have
\[
\frac{\E_m\!\left(\widehat{V}_{tay}\right)}
     {\V_m\!\left(\widehat{\mu}_{greg}\right)}
=
\frac{(1 - \pi_\star)(1 - \kappa_\star) + \pi_\star}{\overline{G}_N}+o_{\P}(1).
\]
Combining this with Lemma~\ref{lemma:Gweights} (c) yields (i).

\subsubsection{Proof of (ii)}
From Result 4.1 of \cite{eustache2025high}, we have
\begin{equation} \label{eq:Vg_proof}
\frac{\E_m\!\left(\widehat{V}_{g}\right)}
     {\V_m\!\left(\widehat{\mu}_{greg}\right)}
=
(1 - \pi_\star)\left(1- \frac{n_N^{-1}\sum_{i \in S_N}{g_{i,N}^2h_{ii,N}}}{\overline{G}_N}\right)+\frac{\pi_\star}{\overline{G}_N}+o_{\P}(1).
\end{equation}

Let us study the term $n_N^{-1}\sum_{i \in S_N}g_{i,N}^2h_{ii,N}$. 
From (\ref{A2}), we can write
\[
h_{ii,N}=\kappa_N+r_{i,N},
\qquad \text{ where $r_{i,N} := h_{ii,N} - \kappa_N$, } \quad \text{with} \quad 
\max_{i\in S_N}|r_{i,N}|=o_\P(1).
\]
Hence
\[
\dfrac{1}{n_N}\sum_{i\in S_N}g_{i,N}^2h_{ii,N}
=
\kappa_N \dfrac{1}{n_N}\sum_{i\in S_N}g_{i,N}^2
+
\dfrac{1}{n_N}\sum_{i\in S_N}g_{i,N}^2r_{i,N}.
\]
Moreover,
\[
\left|
\dfrac{1}{n_N}\sum_{i\in S_N}g_{i,N}^2r_{i,N}
\right|
\leq
\max_{i\in S_N}|r_{i,N}|
\,
\dfrac{1}{n_N}\sum_{i\in S_N}g_{i,N}^2.
\]

By Technical Lemma 2 of \citet{eustache2025high},
\[
\dfrac{1}{n_N}\sum_{i\in S_N}g_{i,N}^2
=
\frac{\pi_N}{n_N}\sum_{i\in U_N}g_{i,N}
=
\overline G_N.
\]

Since in our setting $ \overline{G}_N = \dfrac{1-\pi_\star}{1-\kappa_\star} + \pi_\star + o_\P(1)$, we have 
\[
n_N^{-1}\sum_{i \in S_N}{g_{i,N}^2h_{ii,N}} = \kappa_N \left( \frac{1-\pi_\star}{1-\kappa_\star} + \pi_\star\right) + o_\P(1).
\]

Finally, replacing $\overline{G}_N$ and $n_N^{-1}\sum_{i \in S_N}{g_{i,N}^2h_{ii,N}}$ in \eqref{eq:Vg_proof}, we obtain
\[
\frac{\E_m\!\left(\widehat{V}_{g}\right)}
     {\V_m\!\left(\widehat{\mu}_{greg}\right)}
=
\frac{(1-\kappa_\star)(1-\pi_\star\kappa_\star(1-\pi_\star))}{1 - \pi_\star \kappa_\star}+o_\P(1).
\]

\subsubsection{Proof of (iii)}
The result follows directly by combining Result~\ref{res:asympbiasLOO} and Lemma~\ref{lemma:Gweights}(c).

\section{Proof of lemmas}

	\subsection{Proof of Lemma \ref{lemma:unifCV}} \label{proof:lemma:unifCV}

Recall that $\boldsymbol{X}_S \in \mathbb{R}^{n \times (p+1)}$ denotes the matrix whose $i$-th row is $\bm{x}_i$, so that $\bm{x}_i = \boldsymbol{X}_S^\top \bm{e}_i$. Therefore, for $i \in S_N$,
\[
h_{ii,N}
=
\bm{x}_i^{\top}
(\boldsymbol{X}_S^{\top}\boldsymbol{X}_S)^{-1}
\bm{x}_i
=
\bm{e}_i^{\top}
\boldsymbol{X}_S
(\boldsymbol{X}_S^{\top}\boldsymbol{X}_S)^{-1}
\boldsymbol{X}_S^{\top}
\bm{e}_i
=
\bm{e}_i^{\top}
\boldsymbol{P}(\boldsymbol{X}_S)
\bm{e}_i.
\]
Since $\boldsymbol{X}_S = [\mathbf{1}_n, \widetilde{\boldsymbol{X}}_S]$, we can apply Lemma \ref{lemma5} with $E = \text{span}(\mathbf{1}_n)$ of dimension $1$, $\boldsymbol{G} = \widetilde{\boldsymbol{X}}_S$ and $\mathbf{a} =\mathbf{e}_i$. Since $\boldsymbol{P}(E)\mathbf{e}_i = \mathbf{1}_n/n_N$ we have $\rVert \boldsymbol{P}(E)\mathbf{e}_i \rVert_2^2 = 1/n_N$. Moreover, by the Pythagorean theorem, $$ \rVert \boldsymbol{P}(E^\perp)\mathbf{e}_i \rVert_2^2 = \rVert \mathbf{e}_i \rVert_2^2 - \rVert \boldsymbol{P}(E)\mathbf{e}_i \rVert_2^2= 1 - \dfrac{1}{n_N}.$$ Therefore, Lemma \ref{lemma5} gives 
\[
h_{ii,N}
\stackrel{d}{=}
\frac{1}{n_N}
+
\left(1 - \frac{1}{n_N}\right) B_i,
\qquad \text{where }
B_i \mid n_N
\sim
\mathcal{B}\!\left(\frac{p_N}{2}, \frac{n_N-1-p_N}{2}\right).
\]
It remains to bound $\max_{i \in S_N} \big\rvert h_{ii,N}- p_N/n_N \big\rvert$ to conclude the proof. Note that 
\begin{align*}
h_{ii,N} - \frac{p_N}{n_N}
&= \frac{1}{n_N} + \left(1-\frac{1}{n_N}\right) B_i - \frac{p_N}{n_N} \\
&= \frac{1}{n_N} + \left(1-\frac{1}{n_N}\right) \left(B_i - \E(B_i \mid n_N)\right), \quad \text{where } \E(B_i \mid n_N) = \frac{p_N}{n_N-1}.
\end{align*}

Therefore,
\[
\max_{i \in S_N}\left|h_{ii,N}-\frac{p_N}{n_N}\right|
\leq
\frac{1}{n_N}
+
\left(1-\frac{1}{n_N}\right)
\max_{i \in S_N}\left|B_i-\E(B_i \mid n_N)\right|.
\]
So it suffices to show that $\max_{i \in S_N}\left|B_i-\E(B_i \mid n_N)\right|=o_\P(1)$. It follows from Theorem 1 of \citet{skorski2023bernstein} that

\[
\mathbb{P}\{|B_i - \E(B_i \mid n_N)| > \epsilon\} \le 2  \exp\!\left(-\dfrac{\epsilon^2}{2\left(v + \frac{|c|\epsilon}
{3}\right)}\right).
\]
where 
\[
v = \frac{p_N(n_N-1-p_N)}{(n_N-1)^2 (n_N+1)} = \mathcal{O}(n_N^{-1}), \qquad
c = \frac{4(n_N-1-2p_N)}{(n_N-1)(n_N+3)} = \mathcal{O}(n_N^{-1}).
\]

Therefore, for every $\epsilon>0$, there exists a constant $C_\epsilon>0$ such that
\[
\P\!\left(\left|B_i - \E(B_i \mid n_N)\right| > \epsilon \mid n_N \right)
\le e^{-C_\epsilon n_N}.
\]

\noindent
Thus, by the union bound over the $n_N$ sampled units,
\[
\P\!\left(\max_{i \in S_N} \left|B_i - \E(B_i \mid n_N)\right| > \epsilon \mid n_N \right)
\le
n_N\,\P\!\left(\left|B_1 - \E(B_1 \mid n_N)\right| > \epsilon \mid n_N\right)
\le
n_N e^{-C_\epsilon n_N}.
\]

\noindent
Taking expectation with respect to the sampling design, and using the fact that $n_N \sim Bin(N,\pi_N)$, we obtain
\begin{align*}
\P\!\left(
\max_{i \in S_N}\bigl|B_i-\E(B_i\mid n_N)\bigr|>\epsilon
\right)
&=
\E\!\left[
\P\!\left(
\max_{i \in S_N}\bigl|B_i-\E(B_i\mid n_N)\bigr|>\epsilon
\,\middle|\, n_N
\right)
\right] \\
&\leq
\E\!\left[n_N e^{-C_\epsilon n_N}\right]  \\
&=  \sum_{k=0}^{N}ke^{-C_\epsilon k}\binom{N}{k}\pi_N^k(1-\pi_N)^{N-k}\\
&= N \pi_N e^{-C\epsilon}\sum_{l=0}^{N-1}\binom{N-1}{l} (\pi_N e^{-C\epsilon})^{l} (1-\pi_N)^{N-1-l} \\
&=
N \pi_N e^{-C_\epsilon} \left(1-\pi_N + \pi_N e^{-C_\epsilon}\right)^{N-1},
\end{align*}
\noindent
which converges to zero as \(N \to \infty\). Hence, $\max_{i \in S_N}\left|B_i - \E(B_i \mid n_N)\right| = o_\P(1)$. This concludes the proof.

	\subsection{Proof of Lemma \ref{lemma:Gweights}} \label{lproof:lemma:Gweights}
        \subsubsection{Proof of (a)}
 We study the term $\overline{G}_N = N^{-1}\mathbf{t}_{\bx}^\top  \boldsymbol{A}_\Pi^{-1} \mathbf{t}_{\bx}.$ To that aim, let $\mathbf{e}_0 := [1, 0, ...,0]^\top \in \R^N$. Since the intercept is included in the covariates, the following equalities hold:
\begin{equation*} 
  \mathbf{e}_0^\top \mathbf{t}_{\bx} = N, \qquad      \mathbf{e}_0^\top\boldsymbol{A}_\Pi^{-1} \mathbf{e}_0 = \dfrac{n_N}{\pi_N}.
\end{equation*}
Therefore, by the Cauchy-Schwarz inequality,
\begin{align*}
    \left(\mathbf{e}_0^\top \mathbf{t}_{\bx}\right)^2 = \left\{ \left( \boldsymbol{A}_\Pi^{1/2} \mathbf{e}_0\right)^\top   \boldsymbol{A}_\Pi^{-1/2} \mathbf{t}_{\bx}\right\}^2\leqslant \rVert \boldsymbol{A}_\Pi^{1/2} \mathbf{e}_0 \rVert_2^2  \rVert \boldsymbol{A}_\Pi^{-1/2}\mathbf{t}_{\bx}\rVert_2^2 = \dfrac{n_N}{\pi_N} \mathbf{t}_{\bx}^\top \boldsymbol{A}_\Pi^{-1} \mathbf{t}_{\bx}.
\end{align*}
Since $    \left(\mathbf{e}_0^\top \mathbf{t}_{\bx}\right)^2 = N^2$, we obtain
$$\overline{G}_N \geq \dfrac{N\pi_N}{n_N}=1 +o_\P(1),$$ which shows the result. 

\subsubsection{Proof of (b)}

Recall that from \citet{sarndal2003model}, $g_{i,N} = 1 + (\mathbf{t}_{\bx} - \widehat{\mathbf{t}}_{\bx})^\top \boldsymbol{A}_{\Pi }^{-1} \bx_i$ where $\widehat{\mathbf{t}}_{\bx} = \sum_{j \in S_N} \bx_j/\pi_j$. Therefore,
\begin{align*}
\max_{i \in U_N}\left|g_{i,N}-1\right|
&= \max_{i \in U_N}\left|(\mathbf{t}_{\bx} - \widehat{\mathbf{t}}_{\bx})^\top \boldsymbol{A}_{\Pi }^{-1} \bx_i\right| \\
&\leq \max_{i \in U_N} \|\mathbf{t}_{\bx} - \widehat{\mathbf{t}}_{\bx}\|\ \cdot \|\boldsymbol{A}_{\Pi }^{-1} \bx_i \| \\
&\leq \|\mathbf{t}_{\bx} - \widehat{\mathbf{t}}_{\bx}\| \cdot \|\boldsymbol{A}_{\Pi }^{-1}\| \cdot \max_{i \in U_N} \|\bx_i\|.
\end{align*}

By hypothesis, $\|\mathbf{t}_{\bx} - \widehat{\mathbf{t}}_{\bx}\| = \mathcal{O}_\P(\sqrt{N p_N})$ and, for $N$ large enough,
\[
\|\boldsymbol{A}_\Pi^{-1}\| = \frac{1}{\lambda_{\min}(\boldsymbol{A}_\Pi)} \le \frac{1}{c\,N} = \mathcal{O}(1/N).
\]

Therefore, we obtain $ \|\mathbf{t}_{\bx} - \widehat{\mathbf{t}}_{\bx}\| \cdot \|\boldsymbol{A}_{\Pi}^{-1}\| = \mathcal{O}_\P(\sqrt{\kappa_N})$, which yields 
\[
\max_{i \in U_N} \rvert g_{i,N} - 1\rvert  \xrightarrow[N \to \infty]{\P}0.
\]

Concerning $\overline{G}_N$, it follows from the triangle inequality that
\[
|\overline G_N-1|
\le \frac1N\sum_{i\in U_N}|g_{i,N}-1|
\le \max_{i\in U_N}|g_{i,N}-1| = o_\P(1).
\]
Therefore, $\overline G_N \xrightarrow[N \to \infty]{\P} 1.$

        \subsubsection{Proof of (c)}

Under Bernoulli sampling, we have
\begin{align*}
\overline G_N
&= \frac{1}{N}\sum_{i \in U_N} \mathbf{t}_{\bx}^\top \boldsymbol{A}_{\Pi }^{-1} \bx_i \\
&= \frac{\pi_N}{N} \mathbf{t}_{\bx}^\top \boldsymbol{A}_{S}^{-1} \mathbf{t}_{\bx} \\
&= \frac{\pi_N}{N}
\left(\mathbf{t}_{x,S_N}^{\top} + \mathbf{t}_{x,S_N^c}^{\top}\right)
\bigg(\sum_{i \in S_N} \bm{x}_i \bm{x}_i^{\top}\bigg)^{-1}
\left(\mathbf{t}_{x,S_N} + \mathbf{t}_{x,S_N^c}\right) \\
&= \frac{\pi_N}{N} \mathbf{t}_{x,S_N}^{\top} \boldsymbol{A}_S^{-1} \mathbf{t}_{x,S_N}
+ \frac{2\pi_N}{N} \mathbf{t}_{x,S_N}^{\top} \boldsymbol{A}_S^{-1} \mathbf{t}_{x,S_N^c}
+ \frac{\pi_N}{N} \mathbf{t}_{x,S_N^c}^{\top} \boldsymbol{A}_S^{-1} \mathbf{t}_{x,S_N^c} \\
&=: A_1(S_N) + A_2(S_N,S_N^c) + A_3(S_N^c).
\end{align*}

\noindent
We now control each of these terms separately. The terms $A_1(S_N)$ and $A_2(S_N, S_N^c)$ are relatively straightforward algebraically, whereas $A_3(S_N^c)$ will require a more detailed analysis.

\noindent\textbf{First and second term: $A_1(S_N)$ and $A_2(S_N,S_N^c)$.} Since $\mathbf{e}_1^{\top}\bm{x}_j = 1,$ for all $j \in U_N$, we obtain 
\[
A_1(S_N) = \frac{\pi_N}{N} \sum_{j \in S_N} \bm{x}_j^\top
\left( \sum_{\ell \in S_N} \bm{x}_\ell \bm{x}_\ell^\top \right)^{-1}
\sum_{i \in S_N} \bm{x}_i = \frac{\pi_N}{N} \sum_{i \in S_N} \mathbf{e}_1^\top \bm{x}_i = \pi_\star^2 + o_\P(1).
\]

Similarly, write
\[
A_2(S_N,S_N^c) = 2\frac{\pi_N}{N} \sum_{j \in S_N} \bm{x}_j^\top
\left( \sum_{\ell \in S_N} \bm{x}_\ell \bm{x}_\ell^\top \right)^{-1}
\sum_{i \in S_N^c} \bm{x}_i = 2\frac{\pi_N}{N} \sum_{i \in S_N^c} \mathbf{e}_1^\top \bm{x}_i = 2\pi_\star (1-\pi_\star) + o_\P(1).
\]

\noindent\textbf{Third term: $A_3(S_N^c)$.} Recall that $A_3(S_N^c)= \pi_N N^{-1} \mathbf{t}_{x,S_N^c}^\top \boldsymbol{A}_S^{-1} \mathbf{t}_{x,S_N^c}$ and 
\[
\mathbf{t}_{x,S_N^c} = \sum_{i\in S_N^c}{\bm{x}}_i=\binom{N-n_N}{\sum_{i\in S_N^c}\widetilde{\bm{x}}_i}.
\]
We first study $\mathbf{t}_{x,S_N^c}$. Since the covariates $\{\widetilde{\bm{x}}_i\}_{i \in U_N}$ are independent with distribution $\mathcal{N}(0, \mathbf{I}_p)$, conditionally on $n_N$, we have
\[
{\sum_{i\in S_N^c}\widetilde{\bm{x}}_i} \overset{d}{=}\sqrt{N-n_N}\widetilde{\mathbf{w}}, \quad \text{where } \widetilde{\mathbf{w}} \sim \mathcal{N}_p(0,\mathbf{I}_p).
\]
Thus, defining
\[
\mathbf{w}:=\binom{1}{\widetilde{\mathbf{w}}},
\qquad
\mathbf{a}_N:=\binom{\sqrt{N(1-\pi_\star)}-1}{\mathbf{0}_p},
\]
we obtain
\begin{align*}
\mathbf{t}_{x,S_N^c}
&= \binom{N-n_N}{\sqrt{N-n_N}\widetilde{\mathbf{w}}} \\
&=\sqrt{N-n_N}\bigg(\binom{1}{\widetilde{\mathbf{w}}} + \binom{\sqrt{N-n_N}-1}{\mathbf{0}_p}\bigg) \\
&= \sqrt{N(1-\pi_\star)}\,(\mathbf{w} + \mathbf{a}_N) + o_\P(N).
\end{align*}

Therefore,
\begin{align*}
A_3(S_N^c)
&=
\frac{\pi_\star}{N}\,
\mathbf{t}_{x,S_N^c}^\top
\boldsymbol{A}_S^{-1}
\mathbf{t}_{x,S_N^c} + o_\P(1)\\
&=
\pi_\star(1-\pi_\star)\,
(\mathbf{w} + \mathbf{a}_N)^\top
\boldsymbol{A}_S^{-1}
(\mathbf{w} + \mathbf{a}_N)
+ o_\P(1) \\
&=
\pi_\star(1-\pi_\star)\Bigl(
\mathbf{w}^\top \boldsymbol{A}_S^{-1}\mathbf{w}
+ 2\,\mathbf{a}_N^\top \boldsymbol{A}_S^{-1}\mathbf{w}
+ \mathbf{a}_N^\top \boldsymbol{A}_S^{-1}\mathbf{a}_N
\Bigr)
+ o_\P(1) \\
&= \pi_\star(1-\pi_\star) (B_1 + B_2 + B_3) + o_\P(1).
\end{align*}

Each of these terms will also be studied independently. 

\noindent\textbf{First term: $B_1$.} It follows directly from Lemma~\ref{lem_3} that
\[
B_1 = \frac{\kappa_\star}{1-\kappa_\star} + o_\P(1).
\]

\noindent\textbf{Second term: $B_2$.} Using \eqref{eq:schur_complement}, $B_2$ simplifies as follows
\begin{align*}
B_2 
&= \binom{\sqrt{N(1-\pi_\star)}-1}{\mathbf{0}_p}^\top \begin{pmatrix}
\Delta_N^{-1}
&
-\Delta_N^{-1}\tilde{\mathbf{t}}_{x,S}^{\top}\widetilde{\boldsymbol{A}}_S^{-1}
\\[0.2cm]
-\widetilde{\boldsymbol{A}}_S^{-1}\tilde{\mathbf{t}}_{x,S} \Delta_N^{-1}
&
\widetilde{\boldsymbol{A}}_S^{-1}
+
\widetilde{\boldsymbol{A}}_S^{-1}\tilde{\mathbf{t}}_{x,S} \Delta_N^{-1}\tilde{\mathbf{t}}_{x,S}^{\top}\widetilde{\boldsymbol{A}}_S^{-1}
\end{pmatrix} \binom{1}{\widetilde{\mathbf{w}}} \\
&= \bigl(\sqrt{N(1-\pi_\star)}-1\bigl) \Delta_N^{-1} - \bigl(\sqrt{N(1-\pi_\star)}-1\bigl) \Delta_N^{-1}\tilde{\mathbf{t}}_{x,S}^{\top}\widetilde{\boldsymbol{A}}_S^{-1}\widetilde{\mathbf{w}} \\
&= o_\P(1),
\end{align*}
where the last equality follows since $\Delta_N^{-1}=(\boldsymbol{A}_S^{-1})_{11}=\mathcal{O}_\P(1/n_N)$ by Lemma~\ref{lem_2}, and $\tilde{\mathbf{t}}_{x,S}^\top \boldsymbol{A}_S^{-1} \widetilde{\mathbf{w}} = \mathcal{O}_\P(1)$, as shown in the proof of Lemma~\ref{lem_3}. 

\noindent\textbf{Third term: $B_3$.} By \eqref{eq:schur_complement} and Lemma~\ref{lem_2}, $B_3$ satisfies
\begin{align*}
B_3
&= \binom{\sqrt{N(1-\pi_\star)}-1}{\mathbf{0}_p}^\top \begin{pmatrix}
\Delta_N^{-1}
&
-\Delta_N^{-1}\tilde{\mathbf{t}}_{x,S}^{\top}\widetilde{\boldsymbol{A}}_S^{-1}
\\[0.2cm]
-\widetilde{\boldsymbol{A}}_S^{-1}\tilde{\mathbf{t}}_{x,S} \Delta_N^{-1}
&
\widetilde{\boldsymbol{A}}_S^{-1}
+
\widetilde{\boldsymbol{A}}_S^{-1}\tilde{\mathbf{t}}_{x,S} \Delta_N^{-1}\tilde{\mathbf{t}}_{x,S}^{\top}\widetilde{\boldsymbol{A}}_S^{-1}
\end{pmatrix} \binom{\sqrt{N(1-\pi_\star)}-1}{\mathbf{0}_p} \\
&= N(1-\pi_{\star})\cdot \Delta_N^{-1} + o_\P(1) \\
&= \frac{(1-\pi_\star)}{\pi_N (1-\kappa_\star)} + o_\P(1).
\end{align*}
Hence,
\[
A_3(S_N^c) = \displaystyle \frac{(1-\pi_\star)^2}{1-\kappa_\star} + \pi_\star(1-\pi_\star)\,\frac{\kappa_\star}{1-\kappa_\star} + o_\P(1).
\]

\noindent\textbf{Conclusion.} Summing the three contributions, we obtain
\begin{align*}
\overline{G}_N 
&= \pi_\star^2 + 2\pi_\star(1-\pi_\star)
+ \frac{(1-\pi_\star)^2}{1-\kappa_\star}
+ \pi_\star(1-\pi_\star)\frac{\kappa_\star}{1-\kappa_\star}
+ o_\P(1) \\
&= \frac{1-\pi_\star}{1-\kappa_\star} + \pi_\star + o_\P(1).
\end{align*}

	\section{Auxiliary lemmas}

\begin{lemma}
\label{lem_2}
Assume (\ref{A1}) and (\ref{C1}). Then,
\[
(\boldsymbol{A}_S^{-1})_{11}
=
\frac{1}{n_N(1-\kappa_\star)} + o_\P\!\left(n_N^{-1}\right).
\]
\end{lemma}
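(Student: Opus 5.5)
By the Schur complement formula for the block decomposition $\boldsymbol{A}_S = \begin{pmatrix} n_N & \tilde{\mathbf{t}}_{x,S}^\top \\ \tilde{\mathbf{t}}_{x,S} & \widetilde{\boldsymbol{A}}_S\end{pmatrix}$, the upper-left entry of the inverse is
\[
(\boldsymbol{A}_S^{-1})_{11} = \Delta_N^{-1}, \qquad \Delta_N := n_N - \tilde{\mathbf{t}}_{x,S}^\top \widetilde{\boldsymbol{A}}_S^{-1}\tilde{\mathbf{t}}_{x,S}.
\]
This is the same $\Delta_N$ that appears in \eqref{eq:schur_complement}. The plan is therefore to show $\Delta_N/n_N \xrightarrow{\P} 1-\kappa_\star$. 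Writing $\tilde{\mathbf{t}}_{x,S} = \widetilde{\boldsymbol{X}}_S^\top \mathbf{1}_n$, we get $\tilde{\mathbf{t}}_{x,S}^\top \widetilde{\boldsymbol{A}}_S^{-1}\tilde{\mathbf{t}}_{x,S} = \mathbf{1}_n^\top \boldsymbol{P}(\widetilde{\boldsymbol{X}}_S)\mathbf{1}_n$. Hence
\[
\frac{\Delta_N}{n_N} = \frac{\|\boldsymbol{P}(\widetilde{\boldsymbol{X}}_S)^{\perp}\mathbf{1}_n\|_2^2}{\|\mathbf{1}_n\|_2^2}.
\]
So $1-\Delta_N/n_N$ is the squared cosine between the fixed unit vector $\mathbf{1}_n/\sqrt{n_N}$ and the column space of a Gaussian matrix.

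Conditionally on $n_N$ (and on $S_N$, which under (\ref{A1}) is independent of the covariates), $\widetilde{\boldsymbol{X}}_S$ has i.i.d.\ $\mathcal{N}(0,1)$ entries. Its column space is therefore a uniformly (Haar) distributed $p_N$-dimensional subspace of $\R^{n_N}$, almost surely of full dimension. By rotation invariance, the squared norm of the projection of a fixed unit vector onto it is distributed as $\mathcal{B}(p_N/2,(n_N-p_N)/2)$. This is the same computation as Lemma \ref{lemma5}, applied with $E=\{0\}$ and $\mathbf{a}=\mathbf{1}_n/\sqrt{n_N}$, so we may simply invoke that lemma. Thus $1-\Delta_N/n_N \mid n_N \sim \mathcal{B}(p_N/2,(n_N-p_N)/2)$.

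This conditional Beta variable has mean $\kappa_N$ and variance $\mathcal{O}(1/n_N)$. Chebyshev's inequality conditional on $n_N$, followed by averaging over $n_N$ (using $n_N\to\infty$ in probability, since $N\pi_N\to\infty$), gives $\Delta_N/n_N = 1-\kappa_N + o_\P(1) = 1-\kappa_\star+o_\P(1)$. Since $1-\kappa_\star>0$, the continuous mapping theorem yields $n_N\Delta_N^{-1} = (1-\kappa_\star)^{-1}+o_\P(1)$. This is exactly $(\boldsymbol{A}_S^{-1})_{11} = 1/(n_N(1-\kappa_\star)) + o_\P(n_N^{-1})$.

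The only delicate point is the distributional identification. It needs independence of the sample from the Gaussian covariates, and it needs $p_N<n_N$ so that $\widetilde{\boldsymbol{A}}_S$ is invertible with probability tending to one. Both follow from (\ref{A1}), (\ref{C1}) and $\kappa_\star<1$. Everything else is routine Beta concentration, already used in the proof of Lemma \ref{lemma:unifCV}.
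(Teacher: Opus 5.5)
Your proposal is correct and follows essentially the same route as the paper: the Schur complement formula, rewriting the quadratic form as $\mathbf{1}_n^\top \boldsymbol{P}(\widetilde{\boldsymbol{X}}_S)\mathbf{1}_n$, then Lemma~\ref{lemma5} with $E=\{0\}$ to obtain a scaled $\mathcal{B}(p_N/2,(n_N-p_N)/2)$ law, followed by Chebyshev concentration. The only cosmetic difference is that you apply Chebyshev conditionally on $n_N$ and then average, whereas the paper passes through the laws of total expectation and variance and applies Chebyshev unconditionally.
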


\begin{proof} 
We have
\[ 
\boldsymbol{A}_S
=
\begin{pmatrix}
n_N & \tilde{\mathbf{t}}_{x,S}^{\top} \\
\tilde{\mathbf{t}}_{x,S} & \widetilde{\boldsymbol{A}}_S
\end{pmatrix}.
\]
By the Schur complement formula, $(\boldsymbol{A}_S^{-1})_{11}= (n_N - \tilde{\mathbf{t}}_{x,S}^\top\widetilde{\boldsymbol{A}}_S^{-1} \tilde{\mathbf{t}}_{x,S})^{-1}$. Therefore, it suffices to determine the asymptotic behavior of $\tilde{\mathbf{t}}_{x,S}^\top\widetilde{\boldsymbol{A}}_S^{-1} \tilde{\mathbf{t}}_{x,S}$. Since $\tilde{\mathbf{t}}_{x,S} = \widetilde{\boldsymbol{X}}_S^\top \mathbf{1}_n$ and $\widetilde{\boldsymbol{A}}_S=\widetilde{\boldsymbol{X}}_S^\top\widetilde{\boldsymbol{X}}_S$, we have  $\tilde{\mathbf{t}}_{x,S}^\top\widetilde{\boldsymbol{A}}_S^{-1} \tilde{\mathbf{t}}_{x,S} = \mathbf{1}_n^\top \boldsymbol{P}(\widetilde{\boldsymbol{X}}_S) \mathbf{1}_n$. Hence, the problem reduces to characterizing the distribution of $\mathbf{1}_n^\top \boldsymbol{P}(\widetilde{\boldsymbol{X}}_S) \mathbf{1}_n$. 

Applying Lemma \ref{lemma5} with $E = \{0\}$, $\boldsymbol{G} = \widetilde{\boldsymbol{X}}_S$ and $\mathbf{a} = \mathbf{1}_n$, we get $$\mathbf{1}_n^\top \boldsymbol{P}(\widetilde{\boldsymbol{X}}_S) \mathbf{1}_n \rvert n_N \overset{d}{=} n_N \mathcal{B}\left(\frac{p_N}{2}, \frac{n_N-p_N}{2}\right),$$ since $\rVert \boldsymbol{P}(E^\perp )\mathbf{1}_n\rVert^2 = \rVert \mathbf{1}_n\rVert^2 = n_N.$ Using the moments of the Beta distribution, we have
\[
\E\!\left(\mathbf{1}_n^\top \boldsymbol{P}(\widetilde{\boldsymbol{X}}_S)\mathbf{1}_n \mid n_N \right)
=
p_N, \quad \V\!\left(\mathbf{1}_n^\top \boldsymbol{P}(\widetilde{\boldsymbol{X}}_S)\mathbf{1}_n \mid n_N \right)
= \frac{2\kappa_\star(1-\kappa_\star)n_N^2}{n_N+2}
= 2\kappa_\star(1-\kappa_\star)n_N + o_\P(n_N).
\]

Using the law of total expectation, we obtain $\E(\mathbf{1}_n^\top \boldsymbol{P}(\widetilde{\boldsymbol{X}}_S)\mathbf{1}_n ) = p_N$. Similarly, by the law of total variance, 
\begin{align*}
\V(\mathbf{1}_n^\top \boldsymbol{P}(\widetilde{\boldsymbol{X}}_S)\mathbf{1}_n ) 
&= \V\left( \E\!\left(\mathbf{1}_n^\top \boldsymbol{P}(\widetilde{\boldsymbol{X}}_S)\mathbf{1}_n \mid n_N \right) \right) + \E \left( \V\!\left(\mathbf{1}_n^\top \boldsymbol{P}(\widetilde{\boldsymbol{X}}_S)\mathbf{1}_n \mid n_N \right)\right) \\
&= \E \left( 2\kappa_\star(1-\kappa_\star)n_N + o_\P(n_N) \right) \\
&= 2\kappa_\star(1-\kappa_\star)N \pi_\star + o(N), \quad \text{as  } n_N \sim Bin(N,\pi_N).
\end{align*}

Then, by Chebyshev's inequality, it follows that
\[
\tilde{\mathbf{t}}_{x,S}^\top
\widetilde{\boldsymbol{A}}_S^{-1}
\tilde{\mathbf{t}}_{x,S}
=
p_N + \mathcal{O}_\P(n_N^{1/2}), \quad \text{and} \quad
n_N - \tilde{\mathbf{t}}_{x,S}^\top
\widetilde{\boldsymbol{A}}_S^{-1}
\tilde{\mathbf{t}}_{x,S}
=
n_N(1-\kappa_\star) + o_\P(n_N).
\]

\noindent
Therefore,
\[
(\boldsymbol{A}_S^{-1})_{11}
= \frac{1}{n_N(1-\kappa_\star)} + o_\P(n_N^{-1}).
\]

\end{proof}

\begin{lemma}
\label{lem_3}
Assume (\ref{A1}) and (\ref{C1}). Define $\mathbf{z} = [1 \ \widetilde{\mathbf{z}}^\top]^\top$ where $\widetilde{\mathbf{z}} \sim \mathcal{N}(0, \mathbf{I}_p)$ and $\widetilde{\mathbf{z}} \perp\!\!\!\perp \{\bm{x}_i\}_{i \in U_N}$. Then,
\[
\mathbf z^{\top}\boldsymbol{A}_S^{-1}\mathbf z
=
\frac{\kappa_\star}{1-\kappa_\star}
+
o_\P(1).
\]
\end{lemma}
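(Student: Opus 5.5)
Write $\boldsymbol{A}_S$ in the block form used in the proof of Lemma~\ref{lem_2},
\[
\boldsymbol{A}_S=\begin{pmatrix} n_N & \tilde{\mathbf{t}}_{x,S}^\top\\ \tilde{\mathbf{t}}_{x,S} & \widetilde{\boldsymbol{A}}_S\end{pmatrix},
\]
and expand $\mathbf z^\top\boldsymbol{A}_S^{-1}\mathbf z$ with the Schur-complement inverse \eqref{eq:schur_complement}. With $\Delta_N := n_N-\tilde{\mathbf{t}}_{x,S}^\top\widetilde{\boldsymbol{A}}_S^{-1}\tilde{\mathbf{t}}_{x,S}$, this gives
\[
\mathbf z^\top\boldsymbol{A}_S^{-1}\mathbf z
= \Delta_N^{-1}\bigl(1-\tilde{\mathbf{t}}_{x,S}^\top\widetilde{\boldsymbol{A}}_S^{-1}\widetilde{\mathbf z}\bigr)^2
+ \widetilde{\mathbf z}^\top\widetilde{\boldsymbol{A}}_S^{-1}\widetilde{\mathbf z}.
\]
By Lemma~\ref{lem_2}, $\Delta_N^{-1}=\{n_N(1-\kappa_\star)\}^{-1}+o_\P(n_N^{-1})$. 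The plan therefore reduces to two claims: $\widetilde{\mathbf z}^\top\widetilde{\boldsymbol{A}}_S^{-1}\widetilde{\mathbf z}\to\kappa_\star/(1-\kappa_\star)$ in probability, and $\tilde{\mathbf{t}}_{x,S}^\top\widetilde{\boldsymbol{A}}_S^{-1}\widetilde{\mathbf z}=\mathcal{O}_\P(1)$. The second claim is also what the proof of $B_2$ invokes. Together they make the first summand $\mathcal O_\P(n_N^{-1})$, and the lemma follows.

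\textbf{Quadratic form.} Conditional on $n_N$, $\widetilde{\boldsymbol{A}}_S\sim\mathcal W_{p_N}(n_N,\mathbf I)$, independent of $\widetilde{\mathbf z}$. By rotational invariance,
\[
\widetilde{\mathbf z}^\top\widetilde{\boldsymbol{A}}_S^{-1}\widetilde{\mathbf z}\overset{d}{=}\|\widetilde{\mathbf z}\|^2\,\mathbf e_1^\top\widetilde{\boldsymbol{A}}_S^{-1}\mathbf e_1,
\]
with the two factors independent, and the classical Wishart fact gives $(\mathbf e_1^\top\widetilde{\boldsymbol{A}}_S^{-1}\mathbf e_1)^{-1}\sim\chi^2_{n_N-p_N+1}$. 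Equivalently, the quadratic form equals $\chi^2_{p_N}/\chi^2_{n_N-p_N+1}$ with independent numerator and denominator, i.e.\ a scaled $F$ variable, matching the Hotelling notation in the appendix. The law of large numbers applied to each chi-square, divided by $n_N$, gives the limit $\kappa_\star/(1-\kappa_\star)$. The randomness of $n_N$ is handled as in Lemma~\ref{lem_2}: condition on $n_N$, then use $n_N/(N\pi_N)\to1$ and $N\pi_N\to\infty$.

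\textbf{Cross term.} Conditional on $\widetilde{\boldsymbol{X}}_S$ (and $n_N$), the cross term $\tilde{\mathbf{t}}_{x,S}^\top\widetilde{\boldsymbol{A}}_S^{-1}\widetilde{\mathbf z}$ is centered Gaussian with variance
\[
\tilde{\mathbf{t}}_{x,S}^\top\widetilde{\boldsymbol{A}}_S^{-2}\tilde{\mathbf{t}}_{x,S}\le \lambda_{\min}(\widetilde{\boldsymbol{A}}_S)^{-1}\,\tilde{\mathbf{t}}_{x,S}^\top\widetilde{\boldsymbol{A}}_S^{-1}\tilde{\mathbf{t}}_{x,S}.
\]
In the proof of Lemma~\ref{lem_2} it was shown that $\tilde{\mathbf{t}}_{x,S}^\top\widetilde{\boldsymbol{A}}_S^{-1}\tilde{\mathbf{t}}_{x,S}=p_N+\mathcal O_\P(n_N^{1/2})=\mathcal O_\P(n_N)$. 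For the eigenvalue factor, Gaussian singular-value bounds give $s_{\min}(\widetilde{\boldsymbol{X}}_S)\ge\sqrt{n_N}-\sqrt{p_N}-t$ with high probability. This yields $\lambda_{\min}(\widetilde{\boldsymbol{A}}_S)\ge c\,n_N(1-\sqrt{\kappa_\star})^2$ with probability tending to one, so the conditional variance is $\mathcal O_\P(1)$ and hence the cross term is $\mathcal O_\P(1)$.

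The main obstacle is this last step. It is the only place needing a non-asymptotic random-matrix bound, and it requires $\kappa_\star<1$ strictly so that $\lambda_{\min}(\widetilde{\boldsymbol{A}}_S)$ stays of order $n_N$. If that bound is inconvenient, the fallback is to compute $\E[\tilde{\mathbf{t}}_{x,S}^\top\widetilde{\boldsymbol{A}}_S^{-2}\tilde{\mathbf{t}}_{x,S}\mid n_N]$ directly. By orthogonal invariance this equals $\E\|\mathbf u\|^2$-type Wishart moments, of order $n_N\cdot\mathrm{tr}\,\E\widetilde{\boldsymbol{A}}_S^{-2}/p_N=\mathcal O(1)$, and Markov's inequality then concludes.
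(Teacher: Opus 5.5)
Your proposal is correct and follows essentially the paper's proof: the same Schur-complement expansion, Lemma~\ref{lem_2} for $\Delta_N^{-1}$, the scaled-$F$ (Hotelling) law for $\widetilde{\mathbf z}^\top\widetilde{\boldsymbol A}_S^{-1}\widetilde{\mathbf z}$, and conditional Gaussianity plus the lower bound on $s_{\min}(\widetilde{\boldsymbol X}_S)$ to get $\tilde{\mathbf t}_{x,S}^\top\widetilde{\boldsymbol A}_S^{-1}\widetilde{\mathbf z}=\mathcal O_\P(1)$.

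The differences are cosmetic: you write $T_1+T_3+T_4$ as the single term $\Delta_N^{-1}(1-\tilde{\mathbf t}_{x,S}^\top\widetilde{\boldsymbol A}_S^{-1}\widetilde{\mathbf z})^2$, and you bound the conditional variance by $\lambda_{\min}(\widetilde{\boldsymbol A}_S)^{-1}\,\tilde{\mathbf t}_{x,S}^\top\widetilde{\boldsymbol A}_S^{-1}\tilde{\mathbf t}_{x,S}$ instead of $\|\widetilde{\boldsymbol A}_S^{-1}\|^2\|\tilde{\mathbf t}_{x,S}\|^2$. Only your unused fallback is off: $\tilde{\mathbf t}_{x,S}$ is not independent of $\widetilde{\boldsymbol A}_S$, and left-rotation invariance actually gives $\E[\tilde{\mathbf t}_{x,S}^\top\widetilde{\boldsymbol A}_S^{-2}\tilde{\mathbf t}_{x,S}\mid n_N]=\E[\mathrm{tr}(\widetilde{\boldsymbol A}_S^{-1})\mid n_N]=p_N/(n_N-p_N-1)$, not the order expression you wrote.
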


\begin{proof}  Let $ \Delta_N = n_N - \tilde{\mathbf{t}}_{x,S}^{\top} \widetilde{\boldsymbol{A}}_S^{-1} \tilde{\mathbf{t}}_{x,S}$ be the Schur complement of $\widetilde{\boldsymbol{A}}_S$ in $\boldsymbol{A}_S$. By the block inverse formula, 
\begin{equation}
\boldsymbol{A}_S^{-1}
=
\begin{pmatrix}
\Delta_N^{-1}
&
-\Delta_N^{-1}\tilde{\mathbf{t}}_{x,S}^{\top}\widetilde{\boldsymbol{A}}_S^{-1}
\\[0.2cm]
-\widetilde{\boldsymbol{A}}_S^{-1}\tilde{\mathbf{t}}_{x,S} \Delta_N^{-1}
&
\widetilde{\boldsymbol{A}}_S^{-1}
+
\widetilde{\boldsymbol{A}}_S^{-1}\tilde{\mathbf{t}}_{x,S} \Delta_N^{-1}\tilde{\mathbf{t}}_{x,S}^{\top}\widetilde{\boldsymbol{A}}_S^{-1}
\end{pmatrix}.
\label{eq:schur_complement}
\end{equation}
It follows that 
\begin{align*}
\mathbf z^{\top} \boldsymbol{A}_S^{-1} \mathbf z
=
\Delta_N^{-1}
+
\widetilde{\mathbf z}^{\top}\widetilde{\boldsymbol{A}}_S^{-1}\widetilde{\mathbf z}
+
\Delta_N^{-1}
\bigl(
\tilde{\mathbf{t}}_{x,S}^{\top}\widetilde{\boldsymbol{A}}_S^{-1}\widetilde{\mathbf z}
\bigr)^2
-
2 \Delta_N^{-1}\tilde{\mathbf{t}}_{x,S}^{\top}\widetilde{\boldsymbol{A}}_S^{-1}\widetilde{\mathbf z}
:=
T_1 + T_2 + T_3 + T_4 .
\end{align*}
We study each term separately and show that, $T_2 = \kappa_\star/(1-\kappa_\star) + o_\P(1)$ and that $T_1$, $T_3$ and $T_4$ all converge to zero in probability, from which the result will follow.  

\noindent\textbf{First term: $T_1 = \Delta_N^{-1}$.} A direct application of Lemma \ref{lem_2} shows that $\Delta_N^{-1} = o_\P(1).$

\noindent\textbf{Second term: $T_2 = \widetilde{\mathbf z}^{\top}\widetilde{\boldsymbol{A}}_S^{-1}\widetilde{\mathbf z}$.} Recall that, conditional on $S_N$, the rows of $\widetilde{\boldsymbol{X}}_S$ are independent $\mathcal{N}(0,\mathbf{I}_p)$ under (\ref{C1}). Therefore, 
$\widetilde{\boldsymbol{A}}_S = \widetilde{\bX}_S^\top \widetilde{\bX}_S \sim \cW_p(\boldsymbol{I}_p, n_N)$ which implies that
\[ 
n_N\, \mathbf{\widetilde{z}}^{\top} \boldsymbol{\widetilde{A}}_S^{-1} \mathbf{\widetilde{z}} \sim T^2_{p_N,n_N}, \quad  \text{ and } \quad T_2 \sim \dfrac{p_N}{n_N-p_N+1} F_{p_N,n_N-p_N+1}.
\]

\noindent
Consequently,
\[
\E[T_2]
= \frac{p_N}{n_N-p_N+1} \cdot \frac{n_N-p_N+1}{n_N-p_N-1}
= \frac{\kappa_\star}{1-\kappa_\star}+o(1), 
\]
\[
\V(T_2)
= \frac{p_N^2}{(n_N-p_N+1)^2}
   \cdot
   \frac{2(n_N-p_N+1)^2\big(p_N+n_N-p_N+1-2\big)}
        {p_N\,(n_N-p_N-1)^2\,(n_N-p_N-3)}
= \frac{2}{n_N}\,\frac{\kappa_\star}{(1-\kappa_\star)^3}+o(n_N^{-1}).
\]

\noindent
An application of Chebyshev's inequality gives
\[
T_2
=
\frac{\kappa_\star}{1-\kappa_\star}
+
o_\P(1).
\]

\noindent\textbf{Third term: $T_3 = \Delta_N^{-1}
\bigl(
\tilde{\mathbf{t}}_{x,S}^{\top}\widetilde{\boldsymbol{A}}_S^{-1}\widetilde{\mathbf z}
\bigr)^2$.} We now show that $T_3 = o_\P(1)$. From Lemma~\ref{lem_2},  we have
$\Delta_N^{-1} = o_\P(1)$, so it suffices to show that
$T_3' = \tilde{\mathbf{t}}_{x,S}^{\top}\widetilde{\boldsymbol{A}}_S^{-1}\widetilde{\mathbf z} = \mathcal{O}_\P(1)$. Note that $T_3' \mid \{\bm{x}_i\}_{i \in S_N} 
\sim 
\mathcal N(0,
\|\widetilde{\boldsymbol{A}}_S^{-1}\tilde{\mathbf{t}}_{x,S}\|^2).$ We therefore study the quantity $\|\widetilde{\boldsymbol{A}}_S^{-1}\tilde{\mathbf{t}}_{x,S}\|$. First, observe that
\[
\tilde{\mathbf{t}}_{x,S}
=
\sum_{i\in S_N}\widetilde{\boldsymbol{X}}_i
\sim \mathcal{N}(0, n_N \mathbf{I}_p).
\]
Hence, $n_N^{-1}\|\tilde{\mathbf{t}}_{x,S}\|^2 \sim \chi^2_p$, which implies, by Chebyshev's inequality, that $\|\tilde{\mathbf{t}}_{x,S}\| = \mathcal{O}_\P(\sqrt{p_N n_N}) = \mathcal{O}_\P(n_N)$. Moreover, since $\widetilde{\boldsymbol{A}}_S = \widetilde{\boldsymbol{X}}_S^\top \widetilde{\boldsymbol{X}}_S$, we have
\[
\|\widetilde{\boldsymbol{A}}_S^{-1}\|
=
\lambda_{\min}^{-1}(\widetilde{\boldsymbol{A}}_S)
=
s_{\min}^{-2}(\widetilde{\boldsymbol{X}}_S).
\]

Under \eqref{C1}, in the regime $p_N/n_N \to \kappa_\star \in (0,1)$, Exercice 7.13 of \citet{vershynin2025high} gives that, for any $t \ge 0$,
\[
\mathbb{P}\!\left(
s_{\min}(\widetilde{\boldsymbol{X}}_S)
\ge
\sqrt{n_N} - \sqrt{p_N} - t
\right)
\ge
1 - 2\exp(-t^2).
\]

Let $\epsilon>0$ be small enough and take $t = \epsilon \sqrt{n_N}$. Then there exists a constant $c = c(\kappa_N,\epsilon) > 0$ such that
\[
\mathbb{P}\!\left(
s_{\min}(\widetilde{\boldsymbol{X}}_S) \ge c \sqrt{n_N}
\right)
\ge 1 - 2 \exp(-\epsilon^2 n_N).
\]
Consequently,
\[
\mathbb{P}\!\left(
\|\widetilde{\boldsymbol{A}}_S^{-1}\| \le \frac{1}{c^2 n_N} 
\right)
\ge 1 - 2 \exp(-\epsilon^2 n_N)
\xrightarrow[N \to \infty]{} 1,
\]
so that $\|\widetilde{\boldsymbol{A}}_S^{-1}\|
= \mathcal{O}_\P(n_N^{-1})$. Combining both bounds, we obtain
\[
\|\widetilde{\boldsymbol{A}}_S^{-1}\tilde{\mathbf{t}}_{x,S}\|
\le
\|\widetilde{\boldsymbol{A}}_S^{-1}\|\,
\|\tilde{\mathbf{t}}_{x,S}\|
=
\mathcal{O}_\P(1).
\]

Therefore, by Chebyshev's inequality, $T_3' \mid \{\bm{x}_i\}_{i \in S_N} = \mathcal{O}_\P(1)$ from which it can be shown that the unconditional tightness also holds. Hence, $T_3 = o_\P(1).$

\noindent\textbf{Fourth term: $T_4 = - 2 \Delta_N^{-1}\tilde{\mathbf{t}}_{x,S}^{\top}\widetilde{\boldsymbol{A}}_S^{-1}\widetilde{\mathbf z}$.} Since $T_3' = \tilde{\mathbf{t}}_{x,S}^{\top}\widetilde{\boldsymbol{A}}_S^{-1}\widetilde{\mathbf z} = \mathcal{O}_\P(1)$, and by Lemma~\ref{lem_2}, $\Delta_N^{-1} = o_\P(1)$, their product is negligible in probability and thus, $T_4 = o_\P(1)$.

\end{proof}

\begin{lemma}\label{lemma5}
    Let $\boldsymbol{G}\in \R^{n\times p}$ be a random matrix with independent $\mathcal{N}(0,1)$ entries. Let $E \subset \R^n$ be a fixed deterministic subspace of dimension $q$, such that $p+q <n $. Then, for any vector $\mathbf{a}\in \R^n$, $$ \mathbf{a}^\top \boldsymbol{P}\left( E + \text{Col}(G)\right) \mathbf{a} \overset{d}{=}  ||\boldsymbol{P}\left( E\right) \mathbf{a} \rVert^2 + ||\boldsymbol{P}( E^\perp )\mathbf{a} \rVert^2 \cdot \cB \left( \dfrac{p}{2}, \dfrac{n-q-p}{2}\right).$$
\end{lemma}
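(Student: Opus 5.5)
Let $F := E + \text{Col}(\boldsymbol{G})$. The plan is to split $\mathbf{a}$ into its $E$ and $E^\perp$ components and reduce the claim to a classical fact: the squared length of the projection of a fixed unit vector onto a uniformly random $p$-dimensional subspace of an $m$-dimensional space is $\mathcal{B}(p/2,(m-p)/2)$-distributed, where here $m=n-q$.

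\textbf{Orthogonal decomposition.} Since $E\subset F$, we have $F = E \oplus F'$, where $F' := \boldsymbol{P}(E^\perp)\,\text{Col}(\boldsymbol{G}) = \text{Col}(\boldsymbol{P}(E^\perp)\boldsymbol{G})$ lies in $E^\perp$. Hence $\boldsymbol{P}(F) = \boldsymbol{P}(E) + \boldsymbol{P}(F')$. Writing $\mathbf{a} = \mathbf{a}_E + \mathbf{a}_\perp$ with $\mathbf{a}_E := \boldsymbol{P}(E)\mathbf{a}$ and $\mathbf{a}_\perp := \boldsymbol{P}(E^\perp)\mathbf{a}$, we obtain
\[
\mathbf{a}^\top \boldsymbol{P}(F)\mathbf{a} = \|\mathbf{a}_E\|^2 + \mathbf{a}_\perp^\top \boldsymbol{P}(F')\mathbf{a}_\perp ,
\]
because $\boldsymbol{P}(F')$ annihilates $E$. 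It therefore suffices to show that $\mathbf{a}_\perp^\top \boldsymbol{P}(F')\mathbf{a}_\perp \overset{d}{=} \|\mathbf{a}_\perp\|^2\,\mathcal{B}(p/2,(n-q-p)/2)$. If $\mathbf{a}_\perp=0$ there is nothing to prove, so assume $\mathbf{a}_\perp\neq 0$.

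\textbf{Reduction to a Gaussian matrix in $\R^{n-q}$.} Fix an orthonormal basis $\boldsymbol{Q}\in\R^{n\times(n-q)}$ of $E^\perp$. Then $\boldsymbol{P}(E^\perp)\boldsymbol{G} = \boldsymbol{Q}\boldsymbol{H}$ with $\boldsymbol{H} := \boldsymbol{Q}^\top\boldsymbol{G}$. By rotation invariance of the Gaussian, $\boldsymbol{H}\in\R^{(n-q)\times p}$ has i.i.d.\ $\mathcal{N}(0,1)$ entries. Setting $\mathbf{b} := \boldsymbol{Q}^\top \mathbf{a}_\perp$, we have $\|\mathbf{b}\| = \|\mathbf{a}_\perp\|$ and
\[
\mathbf{a}_\perp^\top \boldsymbol{P}(F')\mathbf{a}_\perp = \mathbf{b}^\top \boldsymbol{P}(\boldsymbol{H})\mathbf{b}.
\]
Since $p < n-q$, the matrix $\boldsymbol{H}$ has rank $p$ almost surely. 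For any orthogonal $\boldsymbol{O}$, $\boldsymbol{O}\boldsymbol{H}\overset{d}{=}\boldsymbol{H}$, so $\text{Col}(\boldsymbol{H})$ is uniformly distributed on the Grassmannian. Therefore $\mathbf{b}^\top \boldsymbol{P}(\boldsymbol{H})\mathbf{b}$ has the same law as $\|\mathbf{b}\|^2\,\mathbf{e}_1^\top\boldsymbol{P}(\boldsymbol{H})\mathbf{e}_1$.

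\textbf{Identifying the Beta law.} This is the main step, and I would handle it by duality. For a fixed subspace $V$ of dimension $p$ and a uniformly random unit vector $\mathbf{u} = \mathbf{g}/\|\mathbf{g}\|$ with $\mathbf{g}\sim\mathcal{N}(0,\mathbf{I}_{n-q})$, we have $\mathbf{u}^\top\boldsymbol{P}(V)\mathbf{u} = \chi^2_p/(\chi^2_p+\chi^2_{n-q-p})$ with independent chi-squares, which is $\mathcal{B}(p/2,(n-q-p)/2)$. By rotation invariance, the law of $\mathbf{e}_1^\top\boldsymbol{P}(\boldsymbol{H})\mathbf{e}_1$ (fixed vector, uniform subspace) equals that of $\mathbf{u}^\top\boldsymbol{P}(V)\mathbf{u}$ (uniform vector, fixed subspace). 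Formally, take $\boldsymbol{O}$ Haar-distributed and independent of everything; then $\mathbf{e}_1^\top\boldsymbol{P}(\boldsymbol{O}V)\mathbf{e}_1 = (\boldsymbol{O}^\top\mathbf{e}_1)^\top\boldsymbol{P}(V)(\boldsymbol{O}^\top\mathbf{e}_1)$, where $\boldsymbol{O}V$ is a uniform subspace and $\boldsymbol{O}^\top\mathbf{e}_1$ is a uniform unit vector.

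The only delicate points are checking that $\boldsymbol{P}(F)=\boldsymbol{P}(E)+\boldsymbol{P}(F')$ holds even though $\text{Col}(\boldsymbol{G})$ is not orthogonal to $E$, which is handled by replacing $\boldsymbol{G}$ with $\boldsymbol{P}(E^\perp)\boldsymbol{G}$, and verifying the almost-sure rank condition $\dim F' = p$, which uses $p+q<n$.
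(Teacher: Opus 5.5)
Your proof is correct and follows essentially the same route as the paper. Both arguments use the orthogonal splitting $\boldsymbol{P}(E+\text{Col}(\boldsymbol{G})) = \boldsymbol{P}(E) + \boldsymbol{P}(\boldsymbol{P}(E^\perp)\boldsymbol{G})$, reduce to $\mathbf{b}^\top\boldsymbol{P}(\boldsymbol{H})\mathbf{b}$ with $\boldsymbol{H}=\boldsymbol{Q}^\top\boldsymbol{G}$, and then use rotational invariance to trade a fixed vector against a random subspace for a random direction against a fixed subspace. The only difference is that the paper makes this trade with an independent Gaussian vector $\mathbf{g}$ and the eigendecomposition $\boldsymbol{P}(\boldsymbol{H}) = \boldsymbol{R}\,\mathrm{diag}(\mathbf{I}_p,\mathbf{0})\boldsymbol{R}^\top$, so that $\boldsymbol{R}^\top\mathbf{g}$ is standard Gaussian, whereas you use a Haar-distributed rotation.
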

\begin{proof}
We have $E + \text{Col}(G) = E\oplus \text{Col}\left(\boldsymbol{P}(E^\perp )\boldsymbol{G}\right) $ in orthogonal direct sum, so that $$\boldsymbol{P}\left(E + \text{Col}(G) \right)= \boldsymbol{P}\left(E\right) +\boldsymbol{P}\left(\boldsymbol{P}(E^\perp )\boldsymbol{G} \right).  $$ Moreover, decomposing $\mathbf{a} = \boldsymbol{P}(E)\mathbf{a}+\boldsymbol{P}(E^\perp)\mathbf{a}$ and noting that $\boldsymbol{P}(E^\perp)\mathbf{a} \in E^\perp$, we get $$ \mathbf{a}^\top \boldsymbol{P}\left( E + \text{Col}(G)\right) \mathbf{a} = ||\boldsymbol{P}\left( E\right) \mathbf{a} \rVert^2  + \left( \boldsymbol{P}(E^\perp)\mathbf{a}\right)^\top \boldsymbol{P}\left(\boldsymbol{P}(E^\perp )\boldsymbol{G} \right)\left( \boldsymbol{P}(E^\perp)\mathbf{a}\right) .$$

We now study the second term. Let $d = n-q$ and $\boldsymbol{Q}\in \R^{n\times d}$ be a matrix with columns being an orthonormal basis of $E^\perp$. Then, $ \boldsymbol{P}(E^\perp) = \boldsymbol{Q}\boldsymbol{Q}^\top$, and  $\boldsymbol{P}(E^\perp )\boldsymbol{G}  =\boldsymbol{Q}\boldsymbol{Q}^\top \boldsymbol{G}  := \boldsymbol{Q} \boldsymbol{H}$ where $\boldsymbol{H} = \boldsymbol{Q}^\top \boldsymbol{G}$ is standard Gaussian matrix since for each column $\mathbf{g}_j \sim \mathcal{N}(\boldsymbol{0}, \boldsymbol{I}_n)$, we have $\boldsymbol{Q}^\top  \mathbf{g}_j \sim \mathcal{N}(\boldsymbol{0}, \boldsymbol{Q}^\top \boldsymbol{Q}) =  \mathcal{N}(\boldsymbol{0}, \boldsymbol{I}_d).$ Note also that $$ \boldsymbol{P}\left(\boldsymbol{P}(E^\perp )\boldsymbol{G} \right) = \boldsymbol{P}\left(\boldsymbol{Q}\boldsymbol{H} \right)=\boldsymbol{Q}\boldsymbol{P}\left(\boldsymbol{H} \right)\boldsymbol{Q}^\top ,$$ since $\boldsymbol{Q}$ is orthonormal.  Therefore, we can write $$\left( \boldsymbol{P}(E^\perp)\mathbf{a}\right)^\top \boldsymbol{P}\left(\boldsymbol{P}(E^\perp )\boldsymbol{G} \right)\left( \boldsymbol{P}(E^\perp)\mathbf{a}\right)  = \mathbf{b}^\top \boldsymbol{P}(\boldsymbol{H})\mathbf{b},$$ with $\mathbf{b} := \boldsymbol{Q}^\top\mathbf{a}$ a deterministic vector. Since $\boldsymbol{H}$ is standard Gaussian, its column space is rotationally invariant in $\R^d$, which implies that the distribution of the quadratic form $\mathbf{b}^\top \boldsymbol{P}(\boldsymbol{H})\mathbf{b}$ depends on $\mathbf{b}$ only through its norm (see, e.g., the proof of Theorem 3.3.9. of \cite{vershynin2025high}). It follows that, choosing $\mathbf{g} \sim \mathcal{N}(\boldsymbol{0},\boldsymbol{I}_d)$ independent of $\boldsymbol{H}$, then $\mathbf{g} /||\mathbf{g}|| $ is a unit vector and thus,  $$\mathbf{b}^\top \boldsymbol{P}(\boldsymbol{H})\mathbf{b} \overset{d}{=} \rVert \mathbf{b}\rVert^2 \cdot \dfrac{\mathbf{g}^\top\boldsymbol{P}(\boldsymbol{H}) \mathbf{g}}{\mathbf{g}^\top\mathbf{g}}.$$

Note that $p<d$ and $\boldsymbol{H}\in \mathbb{R}^{d  \times p}$ is standard Gaussian, so it has rank $p$, and by the spectral theorem there exists an orthogonal matrix $\boldsymbol{R}$ such that 
$$\boldsymbol{P}(\boldsymbol{H})  = \boldsymbol{R}\begin{pmatrix}
\mathbf{I}_p & \mathbf{0} \\
\mathbf{0} & \mathbf{0}
\end{pmatrix}\boldsymbol{R}^\top.$$ Using rotational invariance again and the fact that $\mathbf{g}\indep \boldsymbol{H}$, we have $\boldsymbol{R}^\top \mathbf{g} \sim \mathcal{N}(\boldsymbol{0},\boldsymbol{I}_d)$, so that, $$\dfrac{\mathbf{g}^\top\boldsymbol{P}(\boldsymbol{H}) \mathbf{g}}{\mathbf{g}^\top\mathbf{g}} = \dfrac{\sum_{j=1}^p z_j^2}{\sum_{j=1}^p z_j^2 + \sum_{j=p+1}^d z_j^2}\sim \dfrac{\chi^2_{p}}{\chi^2_{p}+\chi^2_{d-p}}= \cB \left( \dfrac{p}{2},\dfrac{d-p}{2}\right),$$
where $z_1, ..., z_p$ are independent $\mathcal{N}(0,1)$. Therefore, noting that $d-p=n-q-p$, the result follows.

\end{proof}

\end{document}